\theoremstyle{plain}
\newtheorem*{theorem}{Theorem}
\newtheorem{corollary}{Corollary}
\newtheorem*{lemma}{Lemma}
\newtheorem{lemma2}{Lemma}
\newtheorem{claim}{Claim}
\theoremstyle{definition}
\newtheorem{definition}{Definition}
\newtheorem*{definition2}{Definition}
\newcommand{\spinup}{|\!\uparrow\rangle}
\newcommand{\spindown}{|\!\downarrow\rangle}
\newcommand{\bra}[1]{\langle#1\vert}
\newcommand{\ket}[1]{\vert#1\rangle}
\newcommand{\twovect}[2]{({#1}\;,\;{#2})}
\newcommand\tri{} 
\def\tri[#1]#2(#3,#4)#5(#6){
  \draw[#1] (#3,{#4*sqrt(3)}) -- +(-#6,{-sqrt(3)*#6}) -- +(#6,{-sqrt(3)*#6}) -- cycle;
}
\title{Quantum Pascal's Triangle and Sierpinski's carpet}
\author{
Tom Bannink\thanks{QuSoft and CWI Amsterdam, Science Park 123, 1098 XG Amsterdam, The Netherlands} \\ \small{bannink@cwi.nl}
\and Harry Buhrman\stepcounter{footnote}\footnotemark[1]\;\;\thanks{University of Amsterdam, Science Park 904, 1098 XH Amsterdam, The Netherlands} \\ \small{buhrman@cwi.nl}}
\date{August 2017}
\begin{document}
\maketitle
\begin{abstract}
In this paper we consider a quantum version of Pascal's triangle. Pascal's triangle is a well-known triangular array of numbers and when these numbers are plotted modulo 2, a fractal known as the Sierpinski triangle appears. We first prove the appearance of more general fractals when Pascal's triangle is considered modulo prime powers. The numbers in Pascal's triangle can be obtained by scaling the probabilities of the simple symmetric random walk on the line. In this paper we consider a quantum version of Pascal's triangle by replacing the random walk by the quantum walk known as the Hadamard walk. We show that when the amplitudes of the Hadamard walk are scaled to become integers and plotted modulo three, a fractal known as the Sierpinski carpet emerges and we provide a proof of this using Lucas's theorem. We furthermore give a general class of quantum walks for which this phenomenon occurs.
\end{abstract}

\section{Introduction}
Pascals's triangle, shown in Figure \ref{fig:pascal1}, exhibits many interesting properties one of which is the appearance of a fractal when the numbers are considered modulo a prime $p$ \cite{Wolfram1984,Stewart1995}. This is shown in Figure \ref{fig:pascal3}, and for $p=2$ the fractal is known as the Sierpinski triangle or Sierpinski gasket. One way of obtaining the numbers in Pascal's triangle is through a random walk on a line as will be explained in Section \ref{sec:pascal}. This paper explores the results of considering a similar triangle of numbers that is obtained when the 1-dimensional random walk is replaced by a 1-dimensional quantum walk. This also yields the Sierpinski triangle when the probabilities associated to the quantum walk are considered modulo 2, but more interestingly one can find another fractal known as the Sierpinski carpet hidden in the amplitudes modulo 3 which is not present in Pascal's triangle. When these quantum walk numbers are plotted modulo $p$, more general fractals appear.

\begin{table}
    \makebox[\textwidth][c]{
    \begin{tabular}{r|l|l|l}
        ~ & mod 2 & mod 3 & mod $p$ \\
        \hline
        Pascal's triangle       & Triangle (2) & Triangle (3) & Triangle ($p$) \\
        Hadamard walk           & Triangle (2) & Carpet       & See Figure \ref{fig:generalfractals} \\
        General quantum walk    & Triangle (2) & Carpet or Triangle (3) & See Figure \ref{fig:generalfractals}
    \end{tabular}
    }
    \caption{Summary of the fractals that result from considering various sets of numbers modulo a prime. Triangle ($p$) refers to the version of Sierpinski triangle where $p(p+1)/2$ copies of the triangle are found in every recursion level. See Figure \ref{fig:pascal2} for $p\in\{2,3,5,7\}$. Carpet refers to the Sierpinski carpet as shown in Figure \ref{fig:carpet1}.}
    \label{tab:summary}
\end{table}

This paper starts with Pascal's triangle and shows how it is related to the Sierpinski triangle when the numbers are taken modulo a prime. We then provide a proof of the appearance of a more general version of the Sierpinski triangle when instead we take prime \emph{powers}. Then, quantum walks are introduced with an emphasis on a walk that is commonly known as the Hadamard walk. We derive an expression for the probabilities of these walks and then the appearance of both the Sierpinski triangle and Sierpinski carpet is shown as well as some other properties.
Table \ref{tab:summary} provides a summarising overview of the different fractals that are obtained from these different sources.

\section{Pascal's triangle} \label{sec:pascal}
Pascal's triangle is a set of integers arranged in a triangle, where the $k$'th value in the $n$'th row (both $n$ and $k$ start at zero) is given by the binomial coefficient $\binom{n}{k}$. It is shown in Figure \ref{fig:pascal1}, and can also be constructed by using $\binom{n}{k}=\binom{n-1}{k-1}+\binom{n-1}{k}$, i.e. every number is the sum of its two neighbours in the row above. Alternatively it can be thought of as `scaled' probabilities of a random walk on $\mathbb{Z}$, in the following way. Assume the random walk starts in the origin, and goes left or right with probability $\frac{1}{2}$. The probability of being at location $l\in\mathbb{Z}$ after $n$ steps, with $-n\leq l \leq n$ is given by $\frac{1}{2^n}\binom{n}{(n+l)/2}$ if $n+l$ is even and 0 if $n+l$ is odd. When considering only the $n+1$ non-zero probabilities after $n$ steps, the $k$'th non-zero value corresponds to position $l=-n+2k$ of the line, where $0\leq k\leq n$. The $k$'th non-zero probability is given by $\frac{1}{2^n}\binom{n}{k}$. Removing the factor $\frac{1}{2^n}$ yields the integer numbers in Pascal's triangle.
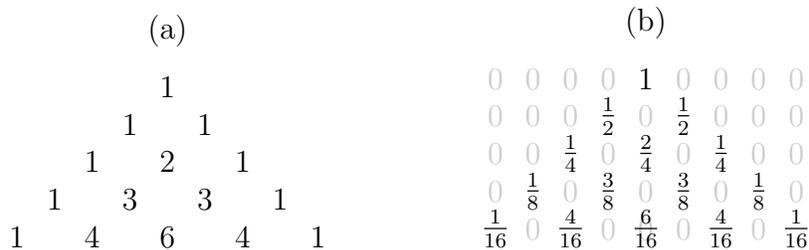
\begin{figure}
    \begin{center}
        \begin{tikzpicture}
            \def\scale{0.5};
            
            \node at ( 0*\scale,1.5*\scale) {(a)};
            
            \node at ( 0*\scale,  0*\scale) {$1$};
            \node at (-1*\scale, -1*\scale) {$1$};
            \node at (+1*\scale, -1*\scale) {$1$};
            
            \node at (-2*\scale, -2*\scale) {$1$};
            \node at ( 0*\scale, -2*\scale) {$2$};
            \node at (+2*\scale, -2*\scale) {$1$};
            
            \node at (-3*\scale, -3*\scale) {$1$};
            \node at (-1*\scale, -3*\scale) {$3$};
            \node at (+1*\scale, -3*\scale) {$3$};
            \node at (+3*\scale, -3*\scale) {$1$};
            
            \node at (-4*\scale, -4*\scale) {$1$};
            \node at (-2*\scale, -4*\scale) {$4$};
            \node at ( 0*\scale, -4*\scale) {$6$};
            \node at (+2*\scale, -4*\scale) {$4$};
            \node at (+4*\scale, -4*\scale) {$1$};           
        \end{tikzpicture}
        \qquad\qquad
        \begin{tikzpicture}
            \def\scale{0.5};
            
            \node at ( 0*\scale,1.5*\scale) {(b)};
            
            \node at ( 0*\scale,  0*\scale) {$1$};
            \node at (-1*\scale, -1*\scale) {$\frac{1}{2}$};
            \node at (+1*\scale, -1*\scale) {$\frac{1}{2}$};
            
            \node at (-2*\scale, -2*\scale) {$\frac{1}{4}$};
            \node at ( 0*\scale, -2*\scale) {$\frac{2}{4}$};
            \node at (+2*\scale, -2*\scale) {$\frac{1}{4}$};
            
            \node at (-3*\scale, -3*\scale) {$\frac{1}{8}$};
            \node at (-1*\scale, -3*\scale) {$\frac{3}{8}$};
            \node at (+1*\scale, -3*\scale) {$\frac{3}{8}$};
            \node at (+3*\scale, -3*\scale) {$\frac{1}{8}$};
            
            \node at (-4*\scale, -4*\scale) {$\frac{1}{16}$};
            \node at (-2*\scale, -4*\scale) {$\frac{4}{16}$};
            \node at ( 0*\scale, -4*\scale) {$\frac{6}{16}$};
            \node at (+2*\scale, -4*\scale) {$\frac{4}{16}$};
            \node at (+4*\scale, -4*\scale) {$\frac{1}{16}$};           
            
            \foreach \y in {0,...,4} {
            \foreach \x in {-4,...,4} {
                \ifthenelse{\x > \y \OR \x < -\y}{
                \node[opacity=0.2] at (\x*\scale, -\y*\scale) {$0$};
                }{}
            }
            }
            \node[opacity=0.2] at ( 0*\scale, -1*\scale) {$0$};
            \node[opacity=0.2] at (-1*\scale, -2*\scale) {$0$};
            \node[opacity=0.2] at (+1*\scale, -2*\scale) {$0$};
            \node[opacity=0.2] at (-2*\scale, -3*\scale) {$0$};
            \node[opacity=0.2] at ( 0*\scale, -3*\scale) {$0$};
            \node[opacity=0.2] at (+2*\scale, -3*\scale) {$0$};
            \node[opacity=0.2] at (-3*\scale, -4*\scale) {$0$};
            \node[opacity=0.2] at (-1*\scale, -4*\scale) {$0$};
            \node[opacity=0.2] at ( 0*\scale, -4*\scale) {$0$};
            \node[opacity=0.2] at (+1*\scale, -4*\scale) {$0$};
            \node[opacity=0.2] at (+3*\scale, -4*\scale) {$0$};
        \end{tikzpicture}
    \end{center}
    \caption{\label{fig:pascal1}The top five rows of Pascal's triangle (a) and the probabilities of the first 5 steps of a simple random walk (b). The probabilities equal to zero in (b) are in light-grey for clarity.}
\end{figure}
\begin{figure}
    \begin{center}
        \makebox[\textwidth][c]{\includegraphics[scale=1.4]{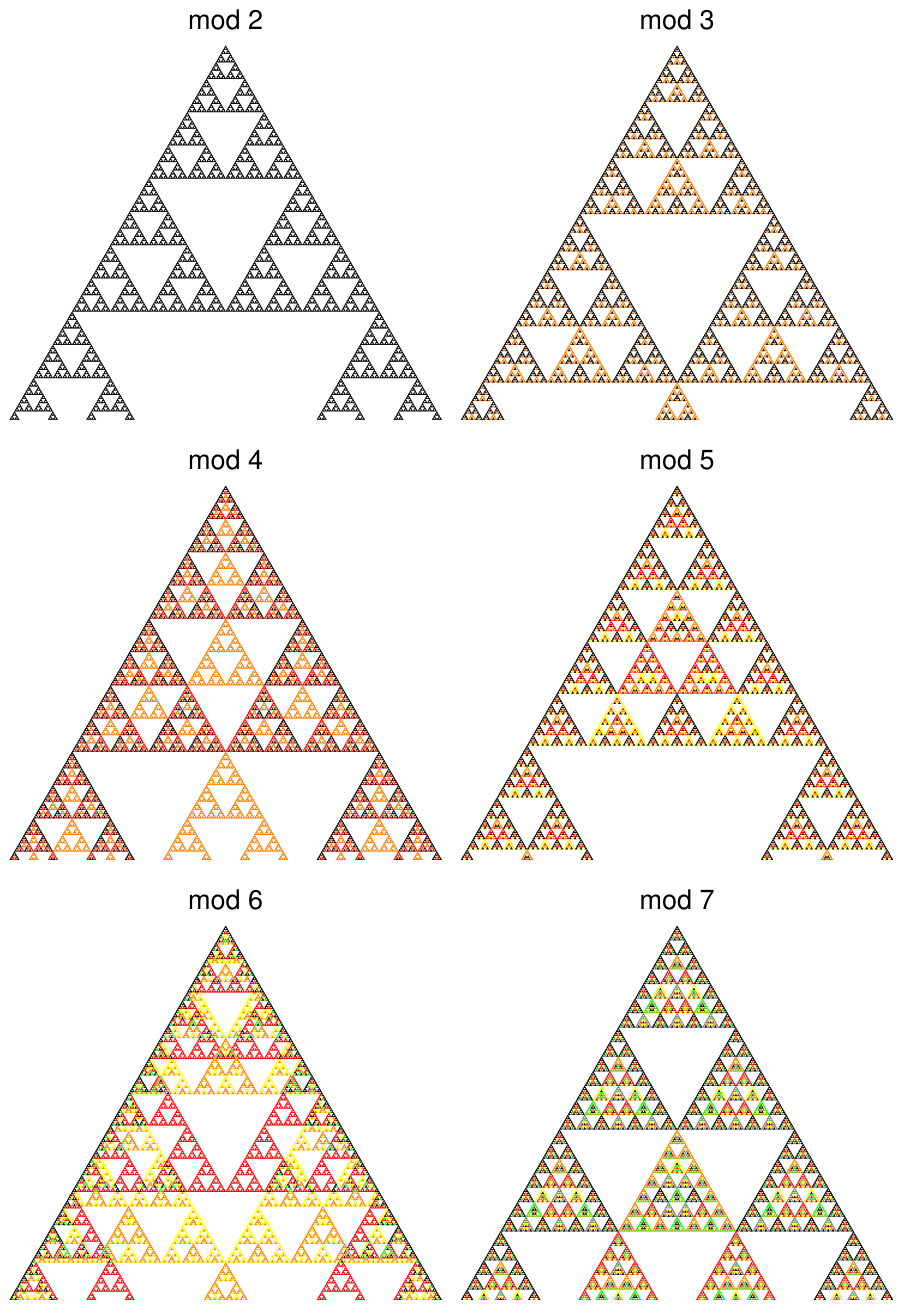}}
    \end{center}
    \caption{\label{fig:pascal2} \label{fig:pascal3}The first 180 of rows of Pascal's triangle shown modulo $n$ where $n\in\{2,3,4,5,6,7\}$. If a value was zero modulo $n$ it is coloured white, otherwise it is given a different colour.}
\end{figure}

\subsection{Pascal's triangle modulo two}
Pascal's triangle has many interesting properties and one interesting feature comes from considering all numbers modulo two \cite{Wolfram1984}. This `binary' triangle is shown in Figure \ref{fig:pascal2} where black and white pixels are used to represent the values modulo two. The figure that appears looks very much like the Sierpinski triangle (also known as the Sierpinski gasket). Indeed, in the limit of an infinite number of rows, Pascal's triangle modulo two is the Sierpinski gasket.

\begin{figure}
    \begin{center}
        \includegraphics[width=0.8\textwidth]{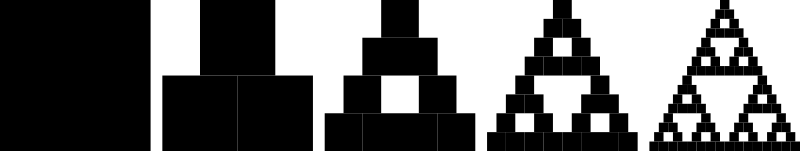}
    \end{center}
    \caption{\label{fig:triangle_construction} First few iterations of constructing the Sierpinski triangle. In each iteration the previous shape is shrunk to half its size and three copies are put in the corners of a triangle so that the shapes are touching.}
\end{figure}
To prove that Pascal's triangle modulo two converges to the Sierpinski triangle, a definition of the Sierpinski triangle is needed. There are different ways to construct it, and one of them is by shrinking and duplication \cite{Barnsley2003}. This process is shown in Figure \ref{fig:triangle_construction}. Start with any shape (a closed bounded region) in the plane, like shown in the first image. Shrink the shape to half its size (both height and width) and make three copies of it. Place these copies in the corners of an equilateral triangle such that the shapes touch as shown in the second image. Repeat this with the new shape. The rightmost image shows the result after four iterations, and after an infinite number of iterations one obtains the fractal.

\begin{figure}
    \begin{center}
        \includegraphics[scale=0.8]{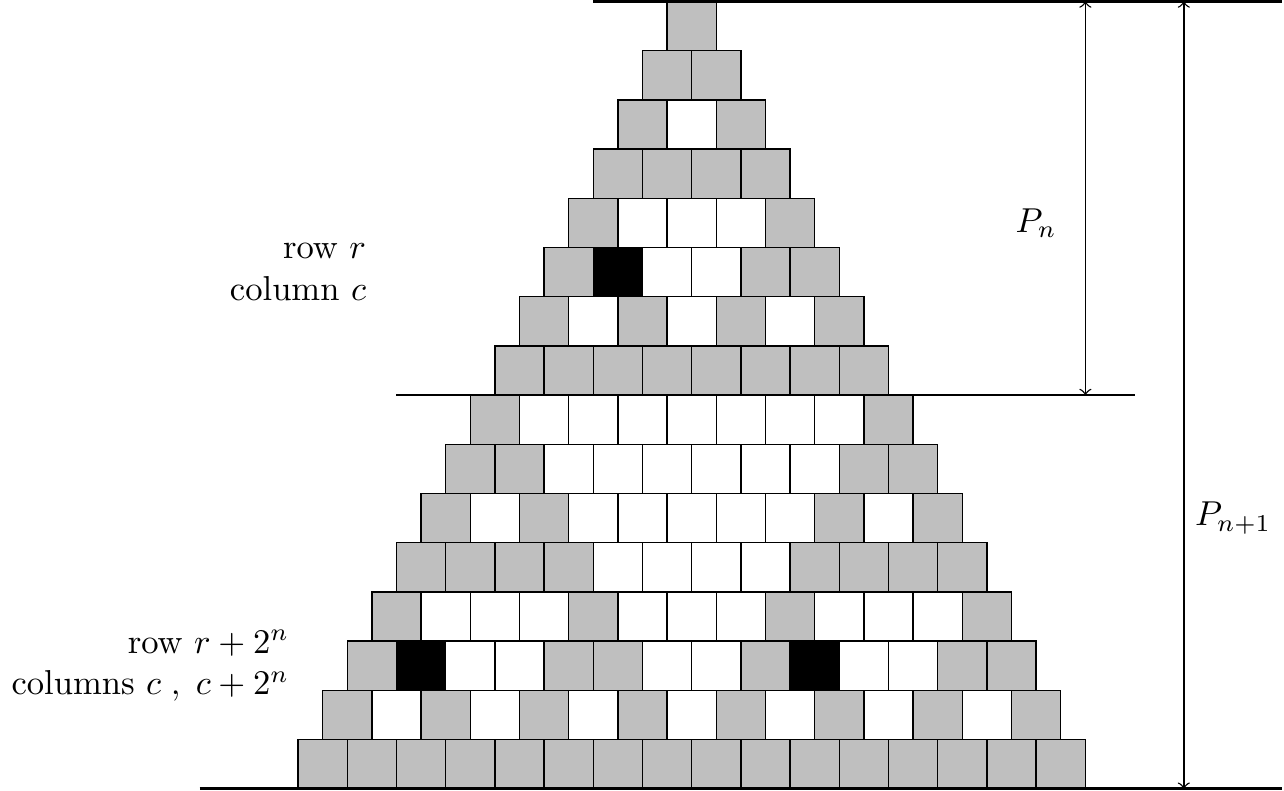}
    \end{center}
    \caption{\label{fig:pascal_row_col} The first $2^{n+1}$ rows of Pascal's triangle modulo two ($P_{n+1}$) contain three copies of the first $2^n$ rows of the triangle ($P_n$). A point at row $r$ and column $c$ in the original triangle is copied to row $r+2^n$ and columns $c$ and $c+2^n$.}
\end{figure}
To show that Pascal's triangle modulo two converges to the fractal, one can show that for all $n$, the shape given by the first $2^{n+1}$  rows contains exactly three copies of the first $2^n$ rows, and nothing more than those copies, i.e. with `white' in-between, see Figure \ref{fig:pascal_row_col}. If we index Pascal's triangle by row $r$ and column $c$, with $r\geq 0$ and $0\leq c \leq r$ then we need to show that for $0\leq r < 2^n$ and $0\leq c \leq r$,
\begin{align}
\label{eq:pascal1}
    \binom{r}{c} \equiv \binom{r+2^n}{c} \equiv \binom{r+2^n}{c+2^n} \mod 2.
\end{align}
Here $\binom{r+2^n}{c}$ corresponds to a `pixel' in the copy at the bottom left, and $\binom{r+2^n}{c+2^n}$ corresponds to the copy at the bottom right, as shown in Figure \ref{fig:pascal_row_col}.
Furthermore, we need to show that there is nothing more than these copies, i.e. the triangular area in-between the copies is empty. This means that the values in the corresponding region are even, and it is sufficient to show that the top row of the region is even with odd endpoints. The top row of this region is at $r=2^n$, and the endpoints of the row, $c=0$ and $c=2^n$ are odd since $\binom{r}{0}=\binom{r}{r}=1$. When the remainder of this row is even, i.e. for $1\leq c \leq 2^n - 1$,
\begin{align}
\label{eq:pascal2}
    \binom{2^n}{c} &\equiv 0 \mod 2,
\end{align}
then the complete region will be even. This is because a value in Pascal's triangle is the sum of its two neighbours in the row above. Adding two even numbers results in an even number, so the next row ($r=2^{n}+1$) will also have even values except near the boundary where an even number is added to an odd number. In particular, at every row the size of the `even part' in the middle decreases by one, resulting in an empty up-side-down triangle ending at $r=2^n + 2^n - 1$ where the two endpoints have joined. By proving that the top row is even it follows that all entries in the up-side-down triangle are even. We conclude that it is sufficient to prove (\ref{eq:pascal1}) and (\ref{eq:pascal2}). For this, one can use Lucas's theorem.

It is convenient to introduce notation for representing a number by it's base-$p$ digits for a prime $p$. We will write
\begin{align*}
    n = [n_m n_{m-1}\cdots n_0]_p = \sum_{j=0}^{m} n_j \; p^j \qquad \text{with } 0 \leq n_i < p \text{ for each } i,
\end{align*}
where the $n_i$ are the base-$p$ digits of $n$

\begin{theorem}[Luc1878\nocite{Lucas78}]
Let $p$ be prime and $n,k$ non-negative integers. Let $n=[n_m n_{m-1}\cdots n_0]_p$ and $k=[k_m k_{m-1}\cdots k_0]_p$. Then
\begin{align}
    \label{eq:lucas}
    \binom{n}{k} \equiv \binom{n_m}{k_m}\binom{n_{m-1}}{k_{m-1}} \cdots \binom{n_0}{k_0} \mod p,
\end{align}
where we define $\binom{n}{k}=0$ if $k>n$.
\end{theorem}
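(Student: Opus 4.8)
The plan is to prove Lucas's theorem via a generating-function argument in the polynomial ring $\mathbb{F}_p[x]$, comparing the coefficient of $x^k$ on both sides of a factored identity for $(1+x)^n$. The point is that $\binom{n}{k}$ is exactly the coefficient of $x^k$ in $(1+x)^n$, so an expression for this polynomial modulo $p$ immediately yields the congruence \eqref{eq:lucas} for every $k$ simultaneously.

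First I would establish the \enquote{freshman's dream}: in characteristic $p$ one has $(1+x)^p \equiv 1 + x^p \pmod p$. This follows because $\binom{p}{i} \equiv 0 \pmod p$ for $0 < i < p$, since the factor $p$ in the numerator of $\binom{p}{i}=p!/(i!(p-i)!)$ is not cancelled by any factor of $i!$ or $(p-i)!$. Iterating this by a short induction on $j$ gives $(1+x)^{p^j} \equiv 1 + x^{p^j} \pmod p$ for every $j$.

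Next, using $n = \sum_{j=0}^m n_j p^j$, I would factor
\begin{align*}
(1+x)^n = \prod_{j=0}^m \left((1+x)^{p^j}\right)^{n_j} \equiv \prod_{j=0}^m \left(1 + x^{p^j}\right)^{n_j} \pmod p,
\end{align*}
and expand each factor as $\left(1 + x^{p^j}\right)^{n_j} = \sum_{k_j=0}^{n_j} \binom{n_j}{k_j} x^{k_j p^j}$. Multiplying these out, the coefficient of $x^k$ is the sum of $\prod_{j=0}^m \binom{n_j}{k_j}$ taken over all tuples $(k_0,\dots,k_m)$ with $0 \le k_j \le n_j$ and $\sum_{j=0}^m k_j p^j = k$.

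The crux of the argument—and the step that needs the most care—is that because each $k_j < p$, the constraint $\sum_j k_j p^j = k$ has \emph{at most one} solution, namely the base-$p$ digits of $k$, by uniqueness of the base-$p$ representation. Hence the coefficient of $x^k$ on the right collapses to the single product $\prod_{j=0}^m \binom{n_j}{k_j}$ evaluated at the digits of $k$; and if some digit $k_j$ exceeds $n_j$ then no term of the expansion contributes, so the coefficient is $0$, matching the convention $\binom{n_j}{k_j}=0$. Comparing this with the coefficient $\binom{n}{k}$ on the left-hand side completes the proof.
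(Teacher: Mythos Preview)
Your proof is correct and is the standard generating-function argument for Lucas's theorem. Note, however, that the paper does not actually supply its own proof of this theorem: it is stated as a classical result with a citation to Lucas (1878) and then used as a black box. The paper does remark, after proving Corollary~\ref{cor:addpower} (Anton's Lemma) \emph{from} Lucas's theorem, that conversely Anton's Lemma implies Lucas's theorem by induction on the number of digits; but since the paper's proof of Anton's Lemma already relies on Lucas, this is an observation about equivalence rather than an independent derivation. Your generating-function approach is therefore not a different route from the paper's proof so much as a genuine proof where the paper gives none.
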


There are many extensions and generalisations of Lucas's theorem \cite{Mestrovic2014}, that include versions for prime powers or similar congruences for generalised binomial coefficients, but they are not needed here.

\begin{corollary} \label{cor:zero}
For any prime $p$,
\begin{align*}
    \binom{n}{k} \equiv 0 \mod p \quad \iff \quad \exists i \; : \; k_i > n_i
\end{align*}
\end{corollary}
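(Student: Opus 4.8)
The plan is to derive this directly from Lucas's theorem, using that $\mathbb{Z}/p\mathbb{Z}$ is an integral domain. Writing $n = [n_m \cdots n_0]_p$ and $k = [k_m \cdots k_0]_p$ (padding the shorter expansion with leading zeros so both have digits indexed $0,\dots,m$), Lucas's theorem gives
\[
\binom{n}{k} \equiv \prod_{j=0}^m \binom{n_j}{k_j} \mod p.
\]
Since $p$ is prime, $\mathbb{Z}/p\mathbb{Z}$ is a field and in particular has no zero divisors, so the product on the right vanishes modulo $p$ if and only if at least one factor $\binom{n_j}{k_j}$ vanishes modulo $p$. Thus the claim reduces to characterising, for a single pair of digits $0 \le n_j, k_j < p$, exactly when $\binom{n_j}{k_j} \equiv 0 \bmod p$.

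For a single digit there are two cases. If $k_j > n_j$, then by the convention stated in the theorem $\binom{n_j}{k_j} = 0$, so this factor is congruent to $0$. If instead $k_j \le n_j$, I claim $\binom{n_j}{k_j} \not\equiv 0 \bmod p$. Indeed, from the identity $k_j!\,(n_j - k_j)!\,\binom{n_j}{k_j} = n_j!$ together with $n_j < p$, the right-hand side $n_j!$ is a product of the integers $1, 2, \ldots, n_j$, each strictly less than $p$ and hence coprime to $p$; as $p$ is prime this forces $p \nmid n_j!$. Were $p \mid \binom{n_j}{k_j}$, then $p$ would divide the left-hand side and hence $n_j!$ as well, a contradiction. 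So $\binom{n_j}{k_j}$ is a nonzero element of $\mathbb{Z}/p\mathbb{Z}$.

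Combining the two cases, a single factor is zero modulo $p$ exactly when $k_j > n_j$, and by the integral-domain remark the whole product is zero modulo $p$ exactly when some index $i$ satisfies $k_i > n_i$. This is precisely the stated equivalence. The only substantive point, rather than a genuine obstacle, is the coprimality argument showing that single-digit binomials with $k_j \le n_j < p$ survive reduction modulo $p$; everything else is a direct unwinding of Lucas's theorem together with the field structure of $\mathbb{Z}/p\mathbb{Z}$.
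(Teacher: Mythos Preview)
Your proof is correct and follows essentially the same approach as the paper: both invoke Lucas's theorem, handle the case $k_i>n_i$ via the convention $\binom{n_i}{k_i}=0$, and for $k_i\le n_i$ argue that $p\nmid n_i!$ (since all factors are strictly less than $p$) forces $p\nmid\binom{n_i}{k_i}$, then use primality of $p$ to pass from nonvanishing factors to a nonvanishing product. Your phrasing in terms of $\mathbb{Z}/p\mathbb{Z}$ being an integral domain is just a slightly more structural way of stating the paper's final sentence.
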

\begin{proof}
If there is an $i$ such that $k_i > n_i$ then $\binom{n_i}{k_i}=0$ and by Lucas's theorem, $\binom{n}{k}\equiv 0 \mod p$. Conversely, if $k_i \leq n_i$ for all $i$ then $\binom{n_i}{k_i}=\frac{n_i!}{k_i!(n_i-k_i)!}$. Since $n_i <p$ and $p$ is prime, we have that $p$ is not a factor of $n_i!$ and also not a factor of $\frac{n_i!}{k_i!(n_i-k_i)!}$. Since $p$ is not a divisor of $\binom{n_i}{k_i}$ and $p$ is prime, it also does not divide the product $\prod_i \binom{n_i}{k_i}$ which concludes the proof.
\end{proof}
The following corollary considers adding extra digits to $n$ and $k$ and is also known as Anton's Lemma:
\begin{corollary} \label{cor:addpower}
If $n,k < p^{m}$ and then for all $l,q\geq 0$,
\begin{align}
    \binom{l\cdot p^m + n}{q\cdot p^m + k} \equiv \binom{l}{q}\binom{n}{k} \mod p. \label{eq:trick2}
\end{align}
\end{corollary}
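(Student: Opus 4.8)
The plan is to reduce the statement to Lucas's theorem by reading off the base-$p$ digits of the numbers appearing on the left-hand side of (\ref{eq:trick2}). The crucial point is that the hypothesis $n,k < p^m$ forces the base-$p$ expansions of $n$ and $k$ to occupy only the $m$ lowest digit positions, so that multiplying $l$ and $q$ by $p^m$ simply shifts their digits above this block without any interaction.

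First I would fix base-$p$ expansions $n = [n_{m-1}\cdots n_0]_p$ and $k = [k_{m-1}\cdots k_0]_p$, padding with leading zeros so that both use exactly positions $0,\dots,m-1$; this is legitimate precisely because $n,k < p^m$. I would likewise write $l = [l_s\cdots l_0]_p$ and $q = [q_s\cdots q_0]_p$, padded to a common length $s+1$. Then I would argue that, since there is no carry out of the lowest $m$ positions, the base-$p$ digits of $l\,p^m + n$ are exactly $n_0,\dots,n_{m-1}$ in positions $0,\dots,m-1$ and $l_0,\dots,l_s$ in positions $m,\dots,m+s$; by uniqueness of the base-$p$ representation this is its full expansion. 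The identical statement holds for $q\,p^m + k$ with the $k_i$ and $q_j$.

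Next, I would apply Lucas's theorem to $\binom{l\,p^m+n}{q\,p^m+k}$. Because the digits in positions below $m$ come from $n$ and $k$, while those in positions $\ge m$ come from $l$ and $q$, the resulting product of digit-wise binomial coefficients splits cleanly:
\begin{align*}
\binom{l\,p^m+n}{q\,p^m+k} \equiv \left(\prod_{i=0}^{m-1}\binom{n_i}{k_i}\right)\left(\prod_{j=0}^{s}\binom{l_j}{q_j}\right) \mod p.
\end{align*}
Applying Lucas's theorem once more, now to $\binom{n}{k}$ and to $\binom{l}{q}$ separately, identifies the first factor with $\binom{n}{k}$ and the second with $\binom{l}{q}$ modulo $p$, which is exactly (\ref{eq:trick2}).

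The step demanding the most care is the digit-alignment claim: I must verify that placing $n$ below $l\,p^m$ genuinely concatenates the two base-$p$ strings with no mixing between the blocks. This is where $n < p^m$ is indispensable, as it guarantees that $n$ touches only positions $0,\dots,m-1$, so the term $l\,p^m$ contributes only to positions $\ge m$ and no carrying can occur across the boundary; an analogous remark handles $k < p^m$. Once this non-interference is in place, the remainder is a mechanical regrouping of the Lucas product, and the convention $\binom{a}{b}=0$ for $b>a$ automatically absorbs cases such as $q>l$ or individual digits with $q_j > l_j$.
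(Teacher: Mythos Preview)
Your proof is correct and follows essentially the same approach as the paper: both arguments use the hypothesis $n,k<p^m$ to see that the base-$p$ expansion of $l\,p^m+n$ (respectively $q\,p^m+k$) is the concatenation of the digits of $l$ and $n$ (respectively $q$ and $k$), then apply Lucas's theorem and regroup the resulting digit-wise product into $\binom{l}{q}\binom{n}{k}$. Your write-up is a bit more explicit about the no-carry justification and the handling of the $\binom{a}{b}=0$ convention, but the method is the same.
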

\begin{proof}
When $n,k<p^m$ then $n_i=k_i=0$ for $i\geq m$, so that $l\cdot p^m + n = [l_M l_{M-1}\cdots l_0 n_{m-1} n_{m-2}\cdots n_0]_p$ and $q\cdot p^m + k = [q_M q_{M-1}\cdots q_0 k_{m-1} k_{m-2}\cdots k_0]_p$. Therefore, by Lucas's theorem
\begin{align*}
    \binom{l\cdot p^{m}+n}{q\cdot p^m + k} &\equiv \binom{l_M}{q_M}\binom{l_{M-1}}{q_{M-1}}\cdots \binom{l_0}{q_0}\binom{n_{m-1}}{k_{m-1}} \cdots \binom{n_0}{k_0} \mod p\\
                                           &\equiv \binom{l}{q}\binom{n}{k} \mod p.
\end{align*}
\end{proof}
Note that vice versa, Corollary \ref{cor:addpower} implies Lucas's theorem by induction on the number of digits.

We can now prove (\ref{eq:pascal1}) and (\ref{eq:pascal2}). By Corollary \ref{cor:addpower} for $p=2$ we have
\begin{align*}
    \binom{r+2^n}{c} \equiv \binom{1}{0}\binom{r}{c} \mod 2,\\
    \binom{r+2^n}{c+2^n} \equiv \binom{1}{1}\binom{r}{c} \mod 2,
\end{align*}
so (\ref{eq:pascal1}) follows from $\binom{1}{0}=\binom{1}{1}=1$. To show (\ref{eq:pascal2}), note that since $1\leq c \leq 2^n -1$ there is a digit $c_i$ that is nonzero for $i<n$, whereas all digits of $r=2^n$ are zero except for the $r_{n}$, so (\ref{eq:pascal2}) follows form Corollary \ref{cor:zero}.

\subsection{Pascals triangle modulo general $n$}

In a similar fashion one can consider Pascal's triangle modulo general $n$. Figure \ref{fig:pascal3} shows this for $n\in\{2,3,4,5,6,7\}$. We can distinguish cases for primes, prime powers and other numbers.

\subsubsection{Pascals triangle modulo a prime}

When $n$ is a prime (2,3,5,7 in the figure), one obtains a generalisation of Sierpinski's triangle. This generalisation for primes $p$ can also be constructed using the shrinking and duplication method. When using the shrinking and duplication construction, one can start with an arbitrary shape, shrink it and create $p(p+1)/2$ copies. These copies have to be arranged into a larger triangle where all the copies are touching. The proof is a generalisation of the one given in the previous section. One has to show that
\begin{align*}
    \binom{r}{c} \equiv \binom{r+l\cdot p^n}{c+q\cdot p^n} \mod p,
\end{align*}
where $0\leq l < p$ and $0\leq q \leq l$. Each value of $(l,q)$ corresponds to one of the $p(p+1)/2$ copies. This equivalence follows from Corollary \ref{cor:addpower} and $\binom{l}{q}\not\equiv 0 \mod p$. The empty up-side-down triangles correspond to $\binom{r+l\cdot p^m}{c+q\cdot p^m}$ but where the range of $c$ is now $r<c<p^m$ as opposed to $0\leq c \leq r$. This case is also included in Corollary \ref{cor:addpower}, and as $\binom{r}{c}=0$ for $r<c$ this finishes the proof.

\subsubsection{Pascals triangle modulo a composite number}

When $n$ is composite (mod 6 in the figure) then the resulting shape is the union of the shapes obtained of its factors (prime powers) albeit with different colours. For example, at $n=6$, shown in Figure \ref{fig:pascal3} one can see the union of the shapes of $p=2$ and $p=3$. This is simply because when $n=p_1^{k_1}\cdots p_m^{k_m}$ then $x\equiv 0 \mod n$ if and only if for all $i$ : $x \equiv 0 \mod p_i^{k_i}$.

\subsubsection{Pascals triangle modulo a prime power}

When $n$ is a prime power then the pattern becomes slightly more complicated. One can see in Figure \ref{fig:pascal3} that for $n=4$, the image is the same as for $n=2$ but with extra triangles in the places that used to be empty. When one would consider $n=8$, this idea is repeated and the holes in the $n=4$ shape are filled with additional triangles.
\begin{figure}
    \makebox[\textwidth][c]{\includegraphics[scale=1.1]{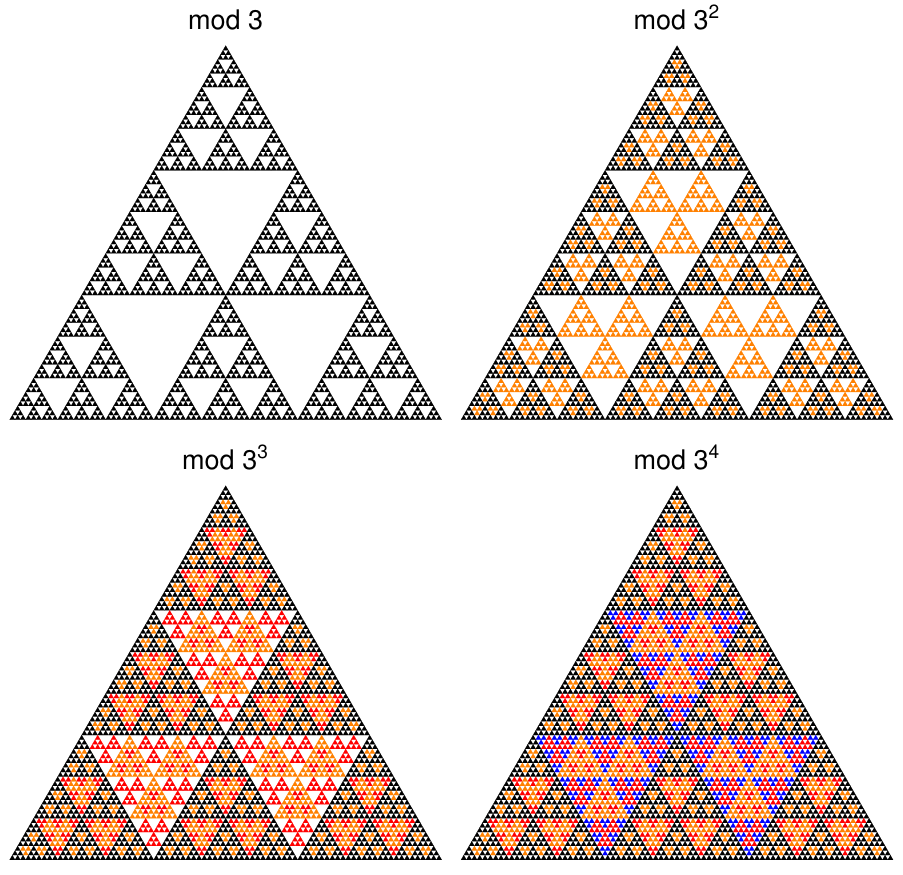}}
    \caption{\label{fig:pascal3powers} Pascal's triangle plotted modulo powers of $3$. The colours give the $3$-adic valuation $\nu_3(\binom{r}{c})$, where black is $0$, orange is $1$, red is $2$ and blue is $3$.}
\end{figure}
Figure \ref{fig:pascal3powers} shows what happens when the triangle is plotted modulo powers of $3$. From the Figure we can conjecture the general pattern: Start with the `mod $p$ triangle' and in every empty region, add $p(p-1)/2$ copies of the `mod $p$ triangle'. This yields the pattern for the `mod $p^2$ triangle'. To go to the `mod $p^3$ triangle', again add $p(p-1)/2$ copies of the `mod $p$ triangle' to the empty regions of the `mod $p^2$ triangle'.
This process can be iterated to find the shape corresponding to $p^k$ for any $k$. To make this statement more concrete, we need the following definition.
\begin{definition2}
    The $p$-adic valuation $\nu_p(n)$ of a number $n$ is the largest power of $p$ that divides $n$.
\end{definition2}

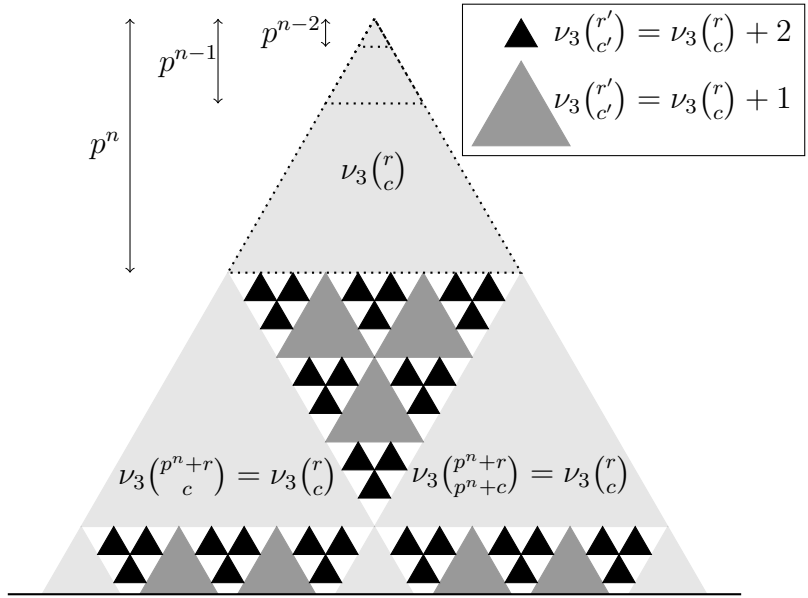
\begin{figure}
    \makebox[\textwidth][c]{
\begin{tikzpicture}[scale=0.65]
    \tri[fill,black,opacity=0.1] (0,0) (3);
    \tri[fill,black,opacity=0.1] (-3,-3) (3);
    \tri[fill,black,opacity=0.1] ( 3,-3) (3);
    
    \tri[fill,black,opacity=0.4] (-1,-3) (1);
    \tri[fill,black,opacity=0.4] ( 1,-3) (1);
    \tri[fill,black,opacity=0.4] ( 0,-4) (1);
    
    \tri[fill,black] (-2.333, -3) (0.33);
    \tri[fill,black] (-1.666, -3) (0.33);
    \tri[fill,black] (-2, -3.333) (0.33);
    
    \tri[fill,black] ( 2.333, -3) (0.33);
    \tri[fill,black] ( 1.666, -3) (0.33);
    \tri[fill,black] ( 2, -3.333) (0.33);
    
    \tri[fill,black] (-0.333, -3) (0.33);
    \tri[fill,black] ( 0.333, -3) (0.33);
    \tri[fill,black] ( 0, -3.333) (0.33);   
    
    \tri[fill,black] (-0.333, -5) (0.33);
    \tri[fill,black] ( 0.333, -5) (0.33);
    \tri[fill,black] ( 0, -5.333) (0.33);      
    
    \tri[fill,black] (-1.333, -4) (0.33);
    \tri[fill,black] (-0.666, -4) (0.33);
    \tri[fill,black] (-1, -4.333) (0.33);      
    
    \tri[fill,black] ( 1.333, -4) (0.33);
    \tri[fill,black] ( 0.666, -4) (0.33);
    \tri[fill,black] ( 1, -4.333) (0.33);      
    
    \tri[fill,black] (-5.333, -6) (0.33);
    \tri[fill,black] (-5, -6.333) (0.33);
    \tri[fill,black] (-4.666, -6) (0.33);
    \tri[fill,black] (-3.333, -6) (0.33);
    \tri[fill,black] (-3, -6.333) (0.33);
    \tri[fill,black] (-2.666, -6) (0.33);
    \tri[fill,black] (-1.333, -6) (0.33);
    \tri[fill,black] (-1, -6.333) (0.33);
    \tri[fill,black] (-0.666, -6) (0.33);
    \tri[fill,black] ( 5.333, -6) (0.33);
    \tri[fill,black] ( 5, -6.333) (0.33);
    \tri[fill,black] ( 4.666, -6) (0.33);
    \tri[fill,black] ( 3.333, -6) (0.33);
    \tri[fill,black] ( 3, -6.333) (0.33);
    \tri[fill,black] ( 2.666, -6) (0.33);
    \tri[fill,black] ( 1.333, -6) (0.33);
    \tri[fill,black] ( 1, -6.333) (0.33);
    \tri[fill,black] ( 0.666, -6) (0.33);
    \tri[fill,black,opacity=0.4] (-4,-6) (0.80);
    \tri[fill,black,opacity=0.4] (-2,-6) (0.80);
    \tri[fill,black,opacity=0.4] ( 2,-6) (0.80);
    \tri[fill,black,opacity=0.4] ( 4,-6) (0.80);
    \tri[fill,black,opacity=0.1] (-6,-6) (0.80);
    \tri[fill,black,opacity=0.1] ( 0,-6) (0.80);
    \tri[fill,black,opacity=0.1] ( 6,-6) (0.80);
    \draw[thick] (-7.5,{-6.80*sqrt(3)}) -- +(15,0);
    
    \tri[dotted,thick] (0,0) (0.33);
    \tri[dotted,thick] (0,0) (1);
    \tri[dotted,thick] (0,0) (3);
    
    \draw ( 0,{-1.8*sqrt(3)}) node {$\nu_3\binom{r}{c}$};
    \draw (-3,{-5.4*sqrt(3)}) node {$\nu_3\binom{p^n + r}{c} = \nu_3\binom{r}{c}$};
    \draw ( 3,{-5.4*sqrt(3)}) node {$\nu_3\binom{p^n + r}{p^n + c} = \nu_3\binom{r}{c}$};
    
    
    \draw (1.8,0.3) rectangle (8.8,-2.8);
    \tri[fill,black] (3,-0.0) (0.33);
    \draw (6.1,{-0.0*sqrt(3)-0.25}) node {$\nu_3\binom{r'}{c'} = \nu_3\binom{r}{c}+2$};
    \tri[fill,black,opacity=0.4] (3,-0.5) (1.0);
    \draw (6.1,{-0.8*sqrt(3)-0.25}) node {$\nu_3\binom{r'}{c'} = \nu_3\binom{r}{c}+1$};    
    
    \draw[<->] (-5.0,0) -- (-5.0,{-3*sqrt(3)});
    \draw (-5.5,{-1.5*sqrt(3)}) node {$p^n$};
    
    \draw[<->] (-3.2,0) -- (-3.2,{-1*sqrt(3)});
    \draw (-3.8,{-0.5*sqrt(3)}) node {$p^{n-1}$};
    
    \draw[<->] (-1,0) -- (-1,{-0.33*sqrt(3)});
    \draw (-1.7,{-0.16*sqrt(3)}) node {$p^{n-2}$};
\end{tikzpicture}
    }
    \caption{\label{fig:valuations}Schematic overview of the statements we prove for the `mod $p^k$ triangle', here shown for $p=3$. The triangles at the top with dashed lines correspond to values $\binom{r}{c}$ of Pascal's triangle. We show that in the other size-$p^n$ triangles, the $p$-adic valuation of the corresponding numbers ($r\to l\cdot p^n + r$ etc) is the same. For the smaller triangles of size $p^{n-k}$, the $p$-adic valuation is $k$ higher.}
\end{figure}
The statements that we want to prove are most easily explained with a picture, shown in Figure \ref{fig:valuations}. We will show that at each recursion level of the triangle, the $p$-adic valuation of the numbers $\binom{r'}{c'}$ in a copy (meaning $r'=l p^n + r$ and $c' = q p^n + c$) is the same as that of the corresponding number $\binom{r}{c}$ in the original region. Furthermore, we show that the regions that were empty in the `mod $p$ triangle' have particular $p$-adic valuations that are 1 or 2 or $k$ higher than the original, as depicten in Figure \ref{fig:valuations}. Every time you fill a previously-empty region with triangles, the $p$-adic valuation increases by one.

We will now prove this by using Kummer's theorem.

~\\ 

\begin{theorem}[Kum1852\nocite{kummer}]
Let $p$ be prime and $n,k$ non-negative integers, $n\geq k$. Then the $p$-adic valuation $\nu_p(\binom{n}{k})$ of $\binom{n}{k}$ is equal to the number of ``carries'' when $k$ and $n-k$ are added in base-$p$ arithmetic.
\end{theorem}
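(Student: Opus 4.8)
The plan is to reduce everything to digit sums via Legendre's formula for the $p$-adic valuation of a factorial and only then translate the resulting identity into a statement about carries. Note first that Lucas's theorem and Corollary \ref{cor:zero}, though already available, only detect whether $\nu_p\binom{n}{k}$ is zero or positive and so cannot pin down the exact valuation; hence a finer tool is needed. Writing $s_p(m)$ for the sum of the base-$p$ digits of $m$, the first step is to establish
\begin{align*}
    \nu_p(m!) = \sum_{i=1}^{\infty} \left\lfloor \frac{m}{p^i} \right\rfloor = \frac{m - s_p(m)}{p-1},
\end{align*}
where the first equality counts the multiples of each power $p^i$ among $1,\dots,m$, and the closed form follows by substituting $m = \sum_j m_j p^j$ and summing the geometric series $\sum_{i=1}^{j} p^{j-i} = (p^j-1)/(p-1)$.

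Next I would apply this to $\binom{n}{k} = n!/(k!\,(n-k)!)$. Since the valuation is additive over products and subtractive over quotients, and since the linear terms cancel because $n = k + (n-k)$, this yields
\begin{align*}
    \nu_p\binom{n}{k} = \frac{s_p(k) + s_p(n-k) - s_p(n)}{p-1}.
\end{align*}
It therefore suffices to show that the right-hand side equals the number of carries in the base-$p$ addition $k + (n-k) = n$.

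The heart of the argument is the carry bookkeeping. Let $a=k$ and $b=n-k$, and let $c_0 = 0, c_1, c_2, \dots$ be the carry digits, so that at each position $i$ the base-$p$ addition satisfies $a_i + b_i + c_i = n_i + p\,c_{i+1}$ with $c_{i+1}\in\{0,1\}$. Summing this identity over all $i$, and observing that the carry-in sum $\sum_i c_i$ and the carry-out sum $\sum_i c_{i+1}$ both equal the total number of carries $C$ (since $c_0=0$ shifts the index sets onto each other), the carry contributions collapse to give $s_p(a) + s_p(b) - s_p(n) = (p-1)C$. Comparing with the displayed expression for the valuation then finishes the proof.

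I expect the main obstacle to be the clean handling of the carry sequence in this last step: one must be careful that the carry into each position and the carry out of each position are summed over the same index set so the two sums genuinely coincide, and one must justify the bound $c_{i+1}\le 1$ (which follows inductively from $a_i + b_i + c_i \le (p-1)+(p-1)+1 < 2p$). Everything else is routine once Legendre's formula is in hand.
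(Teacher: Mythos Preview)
Your proof is correct and is the standard route to Kummer's theorem via Legendre's formula. Note, however, that the paper does not supply its own proof of this statement: it is quoted as a classical result with a citation to Kummer (1852) and then used as a black box, with the subsequent paragraph merely giving an explicit description of the carry sequence $c^{n,k}_i$ in terms of the base-$p$ digits of $n$ and $k$ so that the theorem can be applied. So there is nothing in the paper to compare your argument against; your write-up stands on its own as a complete and clean proof of the cited theorem.
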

One way to find the number of carries that occur when $k$ is added to $n-k$ in base-$p$ is by considering the base-$p$ digits of $n$ and $k$, defining $c^{n,k}_{-1}=0$ and
\begin{align*}
    c^{n,k}_{i} = \begin{cases} 1 & n_i < k_i\\ 0 & n_i > k_i\\ c^{n,k}_{i-1} & n_i = k_i \end{cases}.
\end{align*}
The number of carries is then equal to $\sum_{i\geq 0} c^{n,k}_{i}$. Kummer's theorem can therefore be written as $\nu_p(\binom{n}{k}) = \sum_{i\geq 0} c^{n,k}_{i}$.

The following claim shows what happens to $\nu_p(\binom{n}{k})$ when a digit $l$ is added to $n$ and a digit $q$ is added to $k$:
\begin{claim}\label{claim:valuationdigits}
    Let $p$ be prime and $n,k,q,l,m$ non-negative integers with $0\leq k \leq n < p^m$ and $0 \leq q \leq l < p$. Then
    \begin{align*}
        \nu_p(\binom{l\cdot p^m + n}{q\cdot p^m + k}) = \nu_p(\binom{n}{k})
    \end{align*}
\end{claim}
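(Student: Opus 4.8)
The plan is to apply Kummer's theorem in the carry-counting form set up above, namely $\nu_p(\binom{N}{K}) = \sum_{i\geq 0} c^{N,K}_i$, and to compare the carry sequence of $\binom{l\cdot p^m + n}{q\cdot p^m + k}$ with that of $\binom{n}{k}$ digit by digit. Write $N = l\cdot p^m + n$ and $K = q\cdot p^m + k$. Since $n,k < p^m$ and $l,q < p$, the base-$p$ digits of $N$ and $K$ are $N_i = n_i$, $K_i = k_i$ for $0\leq i < m$, then $N_m = l$, $K_m = q$, and $N_i = K_i = 0$ for $i > m$. The whole computation reduces to tracking the carries $c^{N,K}_i$ across these three digit-ranges.

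First I would observe that for $0 \leq i < m$ the digit pairs of $(N,K)$ coincide with those of $(n,k)$, and both carry recursions start from the same value $c^{N,K}_{-1} = c^{n,k}_{-1} = 0$. Hence $c^{N,K}_i = c^{n,k}_i$ for all $i < m$, and in particular $c^{N,K}_{m-1} = c^{n,k}_{m-1}$.

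The key step, and the main obstacle, is to show that $c^{n,k}_{m-1} = 0$. This is exactly the statement that no carry leaves position $m-1$ when $k$ and $n-k$ are added in base $p$, which holds because the sum equals $n < p^m$ and therefore has no digit at position $m$ or beyond. It is precisely the hypothesis $n < p^m$ (together with $k\leq n$) that decouples the newly added top digit from the lower-order carry structure: were a carry to emerge from position $m-1$, it could combine with the case $q=l$ below to manufacture a spurious extra carry and break the equality.

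Given $c^{N,K}_{m-1}=0$, the remaining digits are routine. At position $m$ we have $K_m = q \leq l = N_m$: if $q<l$ then $N_m > K_m$ gives $c^{N,K}_m = 0$, while if $q=l$ then $N_m = K_m$ gives $c^{N,K}_m = c^{N,K}_{m-1} = 0$; either way $c^{N,K}_m = 0$. For $i>m$ the equal zero digits $N_i = K_i = 0$ force $c^{N,K}_i = c^{N,K}_{i-1}$, so all higher carries vanish by induction. Summing, $\nu_p(\binom{N}{K}) = \sum_{i=0}^{m-1} c^{N,K}_i = \sum_{i=0}^{m-1} c^{n,k}_i = \nu_p(\binom{n}{k})$, where the last equality uses that $c^{n,k}_i = 0$ for $i \geq m-1$ so the truncated sum already equals the full one. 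This completes the argument.
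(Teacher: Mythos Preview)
Your proof is correct and follows the same overall route as the paper: both apply Kummer's theorem in the carry-sequence formulation, observe that the carries for $i<m$ coincide, and then argue that the extra digit at position $m$ contributes no carry. The one difference is in the justification of $c^{n,k}_{m-1}=0$: the paper locates the most significant digit where $n$ and $k$ differ and propagates zeros from there, whereas you argue directly that a carry out of position $m-1$ would force $k+(n-k)\geq p^m$, contradicting $n<p^m$. Your version of this step is shorter and just as rigorous; otherwise the arguments are essentially identical.
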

\begin{proof}
    Define $n'=l\cdot p^m +n$ and $k'=q\cdot p^m + k$.
    Note that $n,k$ have at most $m-1$ digits when expressed in base $p$ and $l,q$ are the $m$-th digits of $n'$ and $k'$. Note that we have $c^{n,k}_{i}=c^{n',k'}_{i}$ for $i<m$ since the first $m-1$ digits are the same. For the $m$-th digits we have
    \begin{align*}
        c^{n',k'}_{m}=\begin{cases} 1 & l < q\\ 0 & l > q\\ c^{n,k}_{m-1} & l = q \end{cases} .
    \end{align*}
    By Kummer's theorem the difference between $\nu_p(\binom{n'}{k'})$ and $\nu_p(\binom{n}{k})$ is equal to $c^{n',k'}_{m}$, so it remains to show that $c^{n',k'}_{m}=0$. By assumption we know $q\leq l$ and if $q<l$ we have $c^{n',k'}_{m} = 0$ by definition. Consider the case $l=q$ where we have $c^{n',k'}_{m} = c^{n,k}_{m-1}$. If $n=k$ then all the $c^{n,k}_{i}$ are zero so we are done. If $n \neq k$ then the consider the most significant digit where $n$ and $k$ differ, i.e. take the highest $i$ for which $n_i \neq k_i$ and call it $i^*$. Since $k<n$ by assumption, it must be true that $k_{i^*} < n_{i^*}$ and therefore $c^{n,k}_{i^*} = 0$. For all $i>i_*$ we have $n_i=k_i$ so $c^{n,k}_{i}=c^{n,k}_{i-1}$. So $c^{n',k'}_{m} = c^{n,k}_{i_*} = 0$.
\end{proof}
\begin{claim}\label{claim:valuationmod}
    If $\nu_p(n) = \nu_p(m)$ then for any $k$ $$n\equiv 0 \mod p^k \iff m\equiv 0 \mod p^k.$$
\end{claim}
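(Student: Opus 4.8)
The plan is to reduce both sides of the biconditional to a single numerical condition on the valuation and then invoke the hypothesis $\nu_p(n)=\nu_p(m)$. The crucial observation is that divisibility by $p^k$ is exactly captured by the $p$-adic valuation: for any positive integer $x$ one has $x \equiv 0 \mod p^k$ if and only if $\nu_p(x) \geq k$. Indeed, writing $x = p^{\nu_p(x)} u$ with $\gcd(u,p)=1$, we see that $p^{\nu_p(x)} \mid x$ but $p^{\nu_p(x)+1} \nmid x$, so $p^k \mid x$ holds precisely when $k \leq \nu_p(x)$.

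With this characterization in hand the claim is immediate, and this equivalence is really the only content of the argument. First I would establish the displayed characterization above directly from the definition of $\nu_p$. Then, assuming $\nu_p(n)=\nu_p(m)$, the condition $n \equiv 0 \mod p^k$ becomes $k \leq \nu_p(n)$ and the condition $m \equiv 0 \mod p^k$ becomes $k \leq \nu_p(m)$; since $\nu_p(n)=\nu_p(m)$ these are the same inequality, and hence the two divisibility statements are equivalent.

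There is no genuine obstacle here, as the statement is essentially a reformulation of the definition of the $p$-adic valuation. The only point requiring a moment of care is the degenerate possibility that $n$ or $m$ is zero, where one adopts the convention $\nu_p(0)=\infty$ and the characterization still holds vacuously. In every application of this claim, however, the relevant numbers are binomial coefficients $\binom{n}{k}$ with $k\leq n$, which are strictly positive, so this case does not arise.
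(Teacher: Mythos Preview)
Your proposal is correct and takes essentially the same approach as the paper: both arguments reduce the claim to the characterization $x\equiv 0 \pmod{p^k} \iff \nu_p(x)\geq k$, after which the hypothesis $\nu_p(n)=\nu_p(m)$ makes the equivalence immediate. The paper's proof is just a one-line version of what you wrote.
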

\begin{proof}
    It follows from the fact that $n\equiv 0 \mod p^k$ if and only if $\nu_p(n) \geq k$.
\end{proof}

Consider Figure \ref{fig:pascal3powers}.
The size of the recursion levels in the `mod $p^k$ triangle' is the same as in the mod $p$ triangle, meaning powers of $p$ and not powers of $p^k$. Repeating what we did before for the mod $p$ triangle, we can see that at recursion level $n$, the ``copies'' and ``empty regions'' correspond to the following binomial coefficients of Pascal's triangle:
\begin{align*}
    \binom{l\cdot p^n+r}{q\cdot p^n + c} & &
    \begin{array}{rl}
        \text{``copies''} \to & 0 \leq q \leq l < p \; , \; 0 \leq c \leq r < p^n \\
        \text{``empty''}  \to & 0 \leq q < l < p \; , \; 0 \leq r < c < p^n 
    \end{array}
    .
\end{align*}
Here $n$ is the recursion level and $(l,q)$ index the different copies or empty regions whereas $r$ and $c$ index points within those regions. By claim \ref{claim:valuationdigits}, the $p$-adic valuation of a number $\binom{r'}{c'}$ in the copy is the same as that of the corresponding original number $\binom{r}{c}$. By Claim \ref{claim:valuationmod}, we have now shown that for any $k$, the copies in the `mod $p^k$ triangle' are indeed all the same.

What is left to show is how the empty regions of the `mod $p$ triangle' are filled. We will first consider $k=2$. The \emph{new} triangles in the mod $p^2$ shape can be indexed as follows:
\begin{align*}
    \binom{l\cdot p^n + s\cdot p^{n-1} + r}{q\cdot p^n + t \cdot p^{n-1} + c} & \text{ with }
    \begin{array}{l}
        0 \leq q < l < p \; ,\\
        0 \leq s < t < p \; ,\\
        0 \leq c \leq r < p^{n-1}.
    \end{array} 
\end{align*}
These lie within the \emph{empty} regions of the mod $p$ triangle. Similar to the proof of Claim \ref{claim:valuationdigits} we can let the carries $c^{r',c'}_{i}$ be defined as before, for the numbers $r'=[l\;s\;r_{n-2}r_{n-3}...r_0]_p$ and $c'=[q\;t\;c_{n-2}...c_0]_p$. We have $q<l$ hence $c^{r',c'}_{n}=0$ and $s<t$ so $c^{r',c'}_{n-1}=1$. This means there is exactly one extra carry compared to $r$ and $c$ and hence by Kummer's theorem
\begin{align*}
    \nu_p(\binom{l\cdot p^n + s\cdot p^{n-1} + r}{q\cdot p^n + t \cdot p^{n-1} + c}) = \nu_p(\binom{r}{c}) + 1
\end{align*}
for these values of $l,q,s,t,r,c$.
This is what we need, because it implies
\begin{align*}
    \binom{l\cdot p^n + s\cdot p^{n-1} + r}{q\cdot p^n + t \cdot p^{n-1} + c} \equiv 0 \mod p^{k+1} \iff \binom{r}{c} \equiv 0 \mod p^{k}
\end{align*}
i.e. in the mod $p^{k+1}$ shape there is a copy of the (smaller) mod $p^k$ shape. This proves what was drawn as a size-$p^n$ triangle in Figure \ref{fig:valuations}.
If we continue to the triangles that are newly added in the `mod $3^3$ triangle', we find that they correspond to
\begin{align*}
    &\binom{r'_n\cdot p^n + r'_{n-1}\cdot p^{n-1} + r'_{n-2}\cdot p^{n-2} + r}{c'_{n}\cdot p^n + c'_{n-1}\cdot p^{n-1} + c'_{n-2}\cdot p^{n-2} + c} \\
    &\qquad\qquad \text{ with }
    \begin{array}{l}
        0 \leq c'_n < r'_n < p \; , \\
        0 \leq r'_{n-1} \leq c'_{n-1} < p \; , \\
        0 \leq r'_{n-2} < c'_{n-2} < p \; , \\
        0 \leq c \leq r < p^{n-2}.
    \end{array} 
\end{align*}
Define $r'=[r'_n r'_{n-1} r'_{n-2} r_{n-3} \cdots r_0]_p$ and $c'=[c'_n c'_{n-1} c'_{n-2} c_{n-3} \cdots c_0]_p$. Then by the same reasoning as before we can apply Kummer's theorem to obtain $\nu_p(\binom{r'}{c'}) = \nu_p(\binom{r}{c})+ c^{r',c'}_{n-2} + c^{r',c'}_{n-1} + c^{r',c'}_{n}$. Looking at the constraints for digits $n-2$ up to $n$ we see that $c^{r',c'}_{n} = 0$, and $c^{r',c'}_{n-1}$ is $1$ or equal to $c^{r',c'}_{n-2}$ which is always $1$. We conclude: $\nu_3(\binom{r'}{c'}) = \nu_3(\binom{r}{c})+2$. We can continue the pattern, and we find that in the `mod $p^{k+1}$ triangle', the newly added triangles correspond to the following constraints on the digits of $r',c'$ with the following carries:
\begin{align*}
    0 \leq c'_n       &<    r'_n < p          & c^{r',c'}_{n} = 0 \\
    0 \leq r'_{n-1}   &\leq c'_{n-1} < p   & c^{r',c'}_{n-1} = 1 \text{ or } c^{r',c'}_{n-1} = c^{r',c'}_{n-2} \\
    &\vdots \\
    0 \leq r'_{n-k+1} &\leq c'_{n-k+1} < p & c^{r',c'}_{n-k+1} = 1 \text{ or } c^{r',c'}_{n-k+1} = c^{r',c'}_{n-k} \\
    0 \leq r'_{n-k}   &<    c'_{n-k} < p      & c^{r',c'}_{n-k} = 1 \\
    0 \leq c          &\leq r < p^{n-k}   & \nu_3\binom{r}{c} .
\end{align*}
We see that $\nu_3\binom{r'}{c'} = \nu_3\binom{r}{c} + k$ as required.
We still have to show that the empty regions in the `mod $p^{k+1}$ triangle' are empty. They correspond to the same indices as above except for $0\leq r'_{n-k} \leq c'_{n-k} < p$ and $0 \leq r < c < p^{n-k}$. We can apply the same idea as in the proof of Claim \ref{claim:valuationdigits} by noting that the first digit where $r$ and $c$ differ will satisfy $r_{i^*} < c_{i^*}$ and hence all the carries $c^{r',c'}_i$ are $1$ for $i \geq i^*$. This gives $\nu_3 \binom{r'}{c'} \geq k+1$, meaning that $\binom{r'}{c'}\equiv 0 \mod p^{k+1}$ so the region is indeed empty.

~

Since the numbers in Pascal's triangle can be thought of as scaled probabilities of a random walk, one could imagine writing down probabilities of a quantum walk, scaled to become integer, and show them modulo two. The next section will introduce a specific quantum walk and apply this idea with $p=2$ and $p=3$.

\section{Hadamard Walk}
Quantum walks are simple models for a quantum particle moving through some system. This paper is only concerned with the probability distribution that emerges from one particular quantum walk, and therefore the physical aspects of it are left out. Here we only provide a short overview of the relevant concepts, and we refer the reader to \cite{Nielsen} for a complete introduction to the field of quantum information. For the purposes of this paper we only need to know that the state of a particle is described by a unit vector in a complex Hilbert space, and quantum mechanics dictates that time evolution is limited to applying unitary operators to this vector. We will denote such state vectors using the commonly used `bra-ket' notation, writing $\ket{\psi}$ for a vector as opposed to $\vec{\psi}$. We can write the vector as a linear combination of orthonormal basis states, $\ket{\psi}=\sum_i \alpha_i \ket{x_i}$ where $\alpha_i\in\mathbb{C}$ and $\sum_i |\alpha_i|^2 = 1$. The $\ket{x_i}$ are the standard basis vectors of the Hilbert space and the coefficients $\alpha_i$ are known as \emph{amplitudes}. Inner products are denoted by $\langle\phi\vert\psi\rangle$ for two vectors $\ket{\phi}$ and $\ket{\psi}$. One of the axioms of quantum mechanics states that one can observe (measure) the system $\ket{\psi}$ in a chosen basis and the result can be any of the basis states, where state $\ket{x_i}$ has probability $|\alpha_i|^2$ of appearing.

A simple example of a quantum walk on a one-dimensional line is the so-called Hadamard walk \cite{Ambainis01}. It can be thought of as a quantum particle moving on $\mathbb{Z}$, the discrete line. The particle has an internal degree of freedom other than its position (a spin-$\frac{1}{2}$ degree of freedom, for physicists). The internal state is sometimes referred to as the \emph{coin state} of the particle with associated Hilbert space $\mathcal{H}_\mathrm{coin}=\mathbb{C}^2$ and basis states $\spinup$ and $\spindown$. The Hilbert space associated with the complete quantum system is $\mathcal{H}=\mathcal{H}_\mathrm{pos}\otimes\mathcal{H}_\mathrm{coin}$ where $\mathcal{H}_\mathrm{pos}=\mathrm{span}\{ \ket{n} \;|\; n\in\mathbb{Z} \}$. So the most general state of the particle is
\[
    \ket{\psi} = \sum_{n\in\mathbb{Z}} \left(\alpha_{n,\uparrow}\ket{n,\uparrow}+\alpha_{n,\downarrow}\ket{n,\downarrow}\right) ,
\]
with normalization $\sum_{n\in\mathbb{Z}}\left(|\alpha_{n,\uparrow}|^2+|\alpha_{n,\downarrow}|^2\right) = 1$ and where we used the notation $\ket{a,b}\equiv \ket{a}\otimes\ket{b}$. The dynamics of the particle are given by repeated application of a unitary operator $U$ that consists of two steps. The first step is a unitary only applied to the internal state and is sometimes considered the quantum analogue of `flipping a coin'. The second step updates the position of the particle conditioned on the outcome of the coin. In the specific case of the Hadamard walk, the unitary in the first step is the Hadamard operator $H$, defined as
\begin{align*}
    H = \frac{1}{\sqrt{2}}\begin{pmatrix} 1 & 1 \\ 1 & -1 \end{pmatrix}, \text{ where } \spinup = \begin{pmatrix}1\\ 0\end{pmatrix} \text{ and } \spindown = \begin{pmatrix}0\\ 1\end{pmatrix}.
\end{align*}
The time evolution operator $U$ is then given by
\begin{align*}
    U = S\cdot(\mathrm{Id}_\mathrm{pos} \otimes H),
\end{align*}
where $\mathrm{Id}_\mathrm{pos}$ is the identity on $\mathcal{H}_\mathrm{pos}$ and $S$ is called the \emph{shift}, given by
\begin{align*}
    S\ket{n,\uparrow} = \ket{n+1,\uparrow},\qquad S\ket{n,\downarrow}=\ket{n-1,\downarrow}.
\end{align*}
It can be thought of as updating the position of the particle conditioned on the outcome of the coin flip. A full step $U$ of the Hadamard walk, acting on the basis state $\ket{n,\downarrow}$ for example, is given by
\[
U\ket{n,\downarrow} \overset{\mathrm{coin}}{=} S\;\left(\frac{1}{\sqrt{2}}\ket{n,\uparrow} - \frac{1}{\sqrt{2}}\ket{n,\downarrow}\right) \overset{\mathrm{shift}}{=} \frac{1}{\sqrt{2}} \ket{{n+1},\uparrow} - \frac{1}{\sqrt{2}} \ket{n-1,\downarrow} .
\]
If we now were to measure the system, the result would be either $\ket{n+1,\uparrow}$ or $\ket{n-1,\downarrow}$, both with probability $|{\pm1}/\sqrt{2}|^2=1/2$.
\begin{figure}
    \begin{center}
        \begin{tikzpicture}
            \foreach \x in {0}
            {
                \draw[shorten >=0.1cm,->] (\x, 1) -- node[above] {$\frac{ 1}{\sqrt{2}}$} (\x+1, 1);
                \draw[shorten >=0.1cm,->] (\x, 1) -- node[left]  {$\frac{ 1}{\sqrt{2}}$} (\x-1, 0);
                \draw[shorten >=0.1cm,->] (\x, 0) -- node[right] {$\frac{ 1}{\sqrt{2}}$} (\x+1, 1);
                \draw[shorten >=0.1cm,->] (\x, 0) -- node[below] {$\frac{-1}{\sqrt{2}}$} (\x-1, 0);
            }
            \foreach \x in {-3,...,3}
            {
                \draw[fill] (\x, 0) circle (0.03);
                \draw[fill] (\x, 1) circle (0.03);
                \node at (\x,2.3) {$|{\x}\rangle_\mathrm{p}$};
            }
            \node at (4.0,1) {$\spinup_\mathrm{c}$};
            \node at (4.0,0) {$\spindown_\mathrm{c}$};
        \end{tikzpicture}
    \end{center}
    \caption{\label{fig:diagram1}Graphical representation of one step $U$ of the Hadamard walk. The dots represent possible quantum states of the form $\ket{n,\uparrow}$ (top row) and $\ket{n,\downarrow}$ (bottom row) and the arrows represent one application of the coin and shift when starting at position 0. The arrow going from $\ket{0,\downarrow}$ to $\ket{{-1},\downarrow}$ represents that the amplitude at $\ket{0,\downarrow}$ is multiplied by ${-1}/\sqrt{2}$ and then stored at $\ket{{-1},\downarrow}$, added to a part of the amplitude coming from $\ket{0,\uparrow}$. }
\end{figure}
Figure \ref{fig:diagram1} shows a schematic representation of one step $U$ of the Hadamard walk.

With these definitions, one can now consider the following process. Select a starting state, say $\ket{\psi_s}=\ket{0,\uparrow}$, evolve it with $U$ for $t$ steps and then measure the position. For example, starting in $\ket{0,\uparrow}$, the state of the system after three steps is given by
\begin{align*}
    U^3 \ket{0,\uparrow} = \frac{1}{(\sqrt{2})^3}\Big(
      \ket{{-3},\downarrow}
    - \ket{{-1},\uparrow}
    + 2 \ket{1,\uparrow} + \ket{1,\downarrow}
    + \ket{3,\uparrow}
    \Big).
\end{align*}
Now measuring the system will result in finding some position $X$ with probabilities $\mathbb{P}[X{=}{-3}]=\frac{1}{8}$, $\mathbb{P}[X{=}{-1}]=\frac{1}{8}$, $\mathbb{P}[X{=}1]=\frac{5}{8}$ and $\mathbb{P}[X{=}3]=\frac{1}{8}$. The amplitudes of the first five steps are also displayed in Figure \ref{fig:qpascal2}.

\subsection{Hadamard triangle}
The numbers in Pascal's triangle can be thought of as scaled probabilities of a random walk, and carrying this idea over to the Hadamard walk, one could consider the amplitudes or probabilities of the Hadamard walk, but scaled by a factor of $\sqrt{2^n}$ so that all numbers involved become integer. Another way to view this is instead of applying $H$, use $\sqrt{2}H$, a matrix with only integer coefficients. Note that we could either use the \emph{amplitudes} or the \emph{probabilities} which are simply their squares. However, since we are primarily interested in whether or not they are divisible by some prime $p$, squaring the amplitudes does not make a difference. We therefore continue with the (unsquared) amplitudes. Figure \ref{fig:qpascal2} shows the start of the Hadamard triangle.
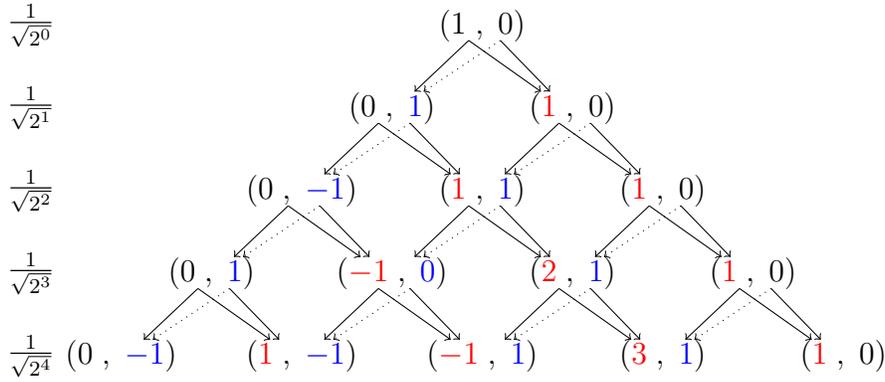
\begin{figure}
    \begin{center}
        \begin{tikzpicture}
            \def\xscale{1.2};
            \def\scale{1.1};
            
            \node at (-5*\xscale,  0*\scale) {$\frac{1}{\sqrt{2^0}}$};
            \node at (-5*\xscale, -1*\scale) {$\frac{1}{\sqrt{2^1}}$};
            \node at (-5*\xscale, -2*\scale) {$\frac{1}{\sqrt{2^2}}$};
            \node at (-5*\xscale, -3*\scale) {$\frac{1}{\sqrt{2^3}}$};
            \node at (-5*\xscale, -4*\scale) {$\frac{1}{\sqrt{2^4}}$};
            
            \node at ( 0*\xscale,  0*\scale) {$\twovect{ 1}{ 0}$};
            \node at (-1*\xscale, -1*\scale) {$\twovect{ 0}{ \color{blue}{1}}$};
            \node at (+1*\xscale, -1*\scale) {$\twovect{ \color{red}{1}}{ 0}$};
            
            \node at (-2*\xscale, -2*\scale) {$\twovect{             0}{\color{blue}{-1}}$};
            \node at ( 0*\xscale, -2*\scale) {$\twovect{\color{red}{1}}{\color{blue}{ 1}}$};
            \node at (+2*\xscale, -2*\scale) {$\twovect{\color{red}{1}}{ 0}$};
            
            \node at (-3*\xscale, -3*\scale) {$\twovect{              0}{\color{blue}{1}}$};
            \node at (-1*\xscale, -3*\scale) {$\twovect{\color{red}{-1}}{\color{blue}{0}}$};
            \node at (+1*\xscale, -3*\scale) {$\twovect{\color{red}{ 2}}{\color{blue}{1}}$};
            \node at (+3*\xscale, -3*\scale) {$\twovect{\color{red}{ 1}}{ 0}$};
            
            \node at (-4*\xscale, -4*\scale) {$\twovect{              0}{\color{blue}{-1}}$};
            \node at (-2*\xscale, -4*\scale) {$\twovect{\color{red}{ 1}}{\color{blue}{-1}}$};
            \node at ( 0*\xscale, -4*\scale) {$\twovect{\color{red}{-1}}{\color{blue}{ 1}}$};
            \node at (+2*\xscale, -4*\scale) {$\twovect{\color{red}{ 3}}{\color{blue}{ 1}}$};
            \node at (+4*\xscale, -4*\scale) {$\twovect{\color{red}{ 1}}{               0}$};           

            \foreach \y in {0}
            {
            \foreach \x in {0}
            {
                \draw[->] (\x*\xscale-0.15*\xscale,-\y*\scale-0.2) -- (\x*\xscale - 0.75*\xscale,-\y*\scale-0.8*\scale);
                \draw[dotted,->] (\x*\xscale+0.20*\xscale,-\y*\scale-0.2) -- (\x*\xscale - 0.65*\xscale,-\y*\scale-0.8*\scale);
                \draw[->] (\x*\xscale-0.15*\xscale,-\y*\scale-0.2) -- (\x*\xscale + 0.65*\xscale,-\y*\scale-0.8*\scale);
                \draw[->] (\x*\xscale+0.20*\xscale,-\y*\scale-0.2) -- (\x*\xscale + 0.75*\xscale,-\y*\scale-0.8*\scale);
            }
            }
            \foreach \y in {1}
            {
            \foreach \x in {-1, 1}
            {
                \draw[->] (\x*\xscale-0.15*\xscale,-\y*\scale-0.2) -- (\x*\xscale - 0.75*\xscale,-\y*\scale-0.8*\scale);
                \draw[dotted,->] (\x*\xscale+0.20*\xscale,-\y*\scale-0.2) -- (\x*\xscale - 0.65*\xscale,-\y*\scale-0.8*\scale);
                \draw[->] (\x*\xscale-0.15*\xscale,-\y*\scale-0.2) -- (\x*\xscale + 0.65*\xscale,-\y*\scale-0.8*\scale);
                \draw[->] (\x*\xscale+0.20*\xscale,-\y*\scale-0.2) -- (\x*\xscale + 0.75*\xscale,-\y*\scale-0.8*\scale);
            }
            }
            \foreach \y in {2}
            {
            \foreach \x in {-2, 0, 2}
            {
                \draw[->] (\x*\xscale-0.15*\xscale,-\y*\scale-0.2) -- (\x*\xscale - 0.75*\xscale,-\y*\scale-0.8*\scale);
                \draw[dotted,->] (\x*\xscale+0.20*\xscale,-\y*\scale-0.2) -- (\x*\xscale - 0.65*\xscale,-\y*\scale-0.8*\scale);
                \draw[->] (\x*\xscale-0.15*\xscale,-\y*\scale-0.2) -- (\x*\xscale + 0.65*\xscale,-\y*\scale-0.8*\scale);
                \draw[->] (\x*\xscale+0.20*\xscale,-\y*\scale-0.2) -- (\x*\xscale + 0.75*\xscale,-\y*\scale-0.8*\scale);
            }
            }
            \foreach \y in {3}
            {
            \foreach \x in {-3, -1, 1, 3}
            {
                \draw[->] (\x*\xscale-0.15*\xscale,-\y*\scale-0.2) -- (\x*\xscale - 0.75*\xscale,-\y*\scale-0.8*\scale);
                \draw[dotted,->] (\x*\xscale+0.20*\xscale,-\y*\scale-0.2) -- (\x*\xscale - 0.65*\xscale,-\y*\scale-0.8*\scale);
                \draw[->] (\x*\xscale-0.15*\xscale,-\y*\scale-0.2) -- (\x*\xscale + 0.65*\xscale,-\y*\scale-0.8*\scale);
                \draw[->] (\x*\xscale+0.20*\xscale,-\y*\scale-0.2) -- (\x*\xscale + 0.75*\xscale,-\y*\scale-0.8*\scale);
            }
            }
        \end{tikzpicture}
    \end{center}
    \caption{\label{fig:qpascal2}The up- and down-components of the amplitudes of the first 5 steps of the Hadamard walk, starting in $\ket{0,\uparrow}$. Every row corresponds to one time-step. The normalisation of each row is shown at the left side. At even timesteps, only the even positions are shown and at odd time-steps only the odd positions are shown, similar to Figure \ref{fig:pascal1}. The arrows represent the time-step of the Hadamard walk. A dotted arrow means the incoming amplitude is multiplied by ${-1}$ before being added to the other incoming amplitude. The red and blue colouring denotes a subset of amplitudes that is used in Section \ref{sec:hadamardmod2} and \ref{sec:hadamardmod3}.}
\end{figure}
We will now derive expressions for these amplitudes when starting in $\ket{0,\uparrow}$.

\subsection{Expressions for amplitudes}
Meyer \cite{Meyer96} gave explicit expressions for the amplitudes encountered in the Hadamard walk. Let $\psi_\uparrow(n,t)$ be the amplitude at $\ket{n,\uparrow}$ after $t$ steps when starting in $\ket{0,\uparrow}$, i.e. $\psi_\uparrow(n,t):=\langle n,\uparrow\!|U^t|0,\uparrow\rangle$. Similarly, let $\psi_\downarrow(n,t):=\langle n,\downarrow\!|U^t|0,\uparrow\rangle$. Then we have
\begin{lemma}[\cite{Meyer96}] When $t+n$ is odd or when $|n|>t$ we have $\psi_\uparrow(n,t)=\psi_\downarrow(n,t)=0$. Otherwise the amplitudes are given by
\begin{align*}
    \psi_\uparrow  (n,t) &=
    \begin{cases}
    \frac{1}{\sqrt{2^t}} & n = t\\
    \frac{1}{\sqrt{2^t}} \sum_{k\geq 1} \binom{(t-n)/2-1}{k-1}\binom{(t+n)/2}{k}(-1)^{(t-n)/2-k} & n < t
    \end{cases}\\
    \psi_\downarrow(n,t) &= \frac{1}{\sqrt{2^t}} \sum_{k\geq 0} \binom{(t-n)/2-1}{k}\binom{(t+n)/2}{k}(-1)^{(t-n)/2-k-1}
\end{align*}
\end{lemma}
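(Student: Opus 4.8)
The plan is to reduce everything to the one-step action of $U=S\cdot(\mathrm{Id}_\mathrm{pos}\otimes H)$ and then prove the closed forms by induction on $t$. Writing $U$ out on the basis states gives
\begin{align*}
    U\ket{n,\uparrow} = \tfrac{1}{\sqrt2}\ket{n+1,\uparrow}+\tfrac{1}{\sqrt2}\ket{n-1,\downarrow}, \qquad
    U\ket{n,\downarrow} = \tfrac{1}{\sqrt2}\ket{n+1,\uparrow}-\tfrac{1}{\sqrt2}\ket{n-1,\downarrow},
\end{align*}
from which, by pairing with $\bra{n,\uparrow}U^{t+1}\ket{0,\uparrow}$ and $\bra{n,\downarrow}U^{t+1}\ket{0,\uparrow}$, one reads off the recurrences
\begin{align*}
    \psi_\uparrow(n,t+1) = \tfrac{1}{\sqrt2}\big(\psi_\uparrow(n-1,t)+\psi_\downarrow(n-1,t)\big), \qquad
    \psi_\downarrow(n,t+1) = \tfrac{1}{\sqrt2}\big(\psi_\uparrow(n+1,t)-\psi_\downarrow(n+1,t)\big).
\end{align*}
The vanishing claims are immediate: each application of $S$ changes the position by $\pm1$, so starting from $0$ the position after $t$ steps has the same parity as $t$ and lies in $[-t,t]$; hence both amplitudes vanish whenever $t+n$ is odd or $|n|>t$.

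First I would dispose of the boundary data so the base case of the induction is clean: the unique walk reaching $n=t$ is $t$ consecutive right-moves ending in coin $\uparrow$, giving $\psi_\uparrow(t,t)=1/\sqrt{2^t}$ and $\psi_\downarrow(t,t)=0$, matching the stated formula and the empty sum. For the interior case $n<t$ I would abbreviate $A=(t-n)/2$ and $B=(t+n)/2$ (the numbers of left- and right-moves needed to reach $(n,t)$, with $A\ge1$), so that the claim reads $\sqrt{2^t}\,\psi_\uparrow=\sum_{k\ge1}\binom{A-1}{k-1}\binom{B}{k}(-1)^{A-k}$ and $\sqrt{2^t}\,\psi_\downarrow=\sum_{k\ge0}\binom{A-1}{k}\binom{B}{k}(-1)^{A-k-1}$. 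The inductive step substitutes these expressions at the shifted arguments $n\mp1$ into the two recurrences, where one checks that passing from $(n\mp1,t)$ to the target $(n,t+1)$ shifts exactly one of $A,B$ by one; the whole problem then collapses to a pair of binomial identities that should follow from Pascal's rule $\binom{m}{k}=\binom{m-1}{k-1}+\binom{m-1}{k}$ after a reindexing of the summation variable.

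A more illuminating route, which also explains where the binomials come from, is a direct path-counting argument. Expanding $U^t\ket{0,\uparrow}$ as a sum over length-$t$ walks, each walk contributes $1/\sqrt{2^t}$ times a sign, and inspection of the transition rules above shows the only negative weight occurs on a $\downarrow\!\to\!\downarrow$ step; hence a walk's sign is $(-1)^m$, where $m$ is the number of consecutive pairs of left-moves. Encoding a walk as a word in $\{L,R\}$ with $A$ letters $L$ and $B$ letters $R$, whose final letter is fixed by the target coin state, the exponent $m$ equals $A$ minus the number of maximal runs of $L$'s. Grouping the words by their number $k$ of $L$-runs turns the signed count into $\sum_k(-1)^{A-k}$ times the number of ways to split the $A$ left-moves into nonempty runs and interleave them among the $R$'s; a stars-and-bars count yields exactly the products $\binom{A-1}{k-1}\binom{B}{k}$ for coin $\uparrow$ (word ending in $R$) and $\binom{A-1}{k}\binom{B}{k}$ for coin $\downarrow$ (word ending in $L$), reproducing the lemma including the shifted lower summation limits and the overall signs.

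I expect the main obstacle in either route to be bookkeeping rather than conceptual. In the inductive route the off-by-one hazards live in reconciling the asymmetric summation ranges ($k\ge1$ for $\psi_\uparrow$ versus $k\ge0$ for $\psi_\downarrow$) and the $(-1)^{A-k}$ versus $(-1)^{A-k-1}$ signs once Pascal's rule has been applied. In the path-counting route the delicate point is the exact translation of ``number of $\downarrow\!\to\!\downarrow$ steps'' into ``$A$ minus number of runs,'' together with the endpoint constraint (whether the word ends in $L$ or $R$), since it is precisely this constraint that shifts one binomial's parameters and fixes the lower summation limit. I would present the support and $n=t$ facts first, then carry out whichever of the two routes I verify is cleaner, keeping the other as an independent sanity check.
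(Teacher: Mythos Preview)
Your path-counting route is essentially the paper's proof: the paper also expands $U^t\ket{0,\uparrow}$ as a sum over length-$t$ paths, organising the count via a two-layer directed graph with four arrow types $r,l,u,d$ (stay-right in the top layer, stay-left in the bottom layer, go up, go down) rather than via $L/R$-words and runs, but your ``number of maximal $L$-runs'' is exactly their parameter $k=d$, and the two binomials $\binom{(t+n)/2}{k}$ and $\binom{(t-n)/2-1}{k-1}$ arise from the identical pair of choices (which top-layer visits switch down; which bottom-layer visits switch up, with the last one forced). The one substantive difference is that the paper carries out the count for an \emph{arbitrary} unitary coin $C$ and only afterwards specialises to $H$; this costs no extra work and yields as a by-product that the probabilities depend only on the single real parameter $|c_r|^2$ and not on the phases of $C$. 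Your inductive route via the one-step recurrences is not in the paper, but it is a legitimate alternative and a reasonable independent check.
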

We will give an alternative and slightly shorter proof of this for a general coin operator. This proof also allows us to make another observation stated in the following claim. Let $C$ be any unitary 2x2 matrix. Any such matrix can be written as follows
\begin{align*}
    C = \begin{pmatrix}
        c_r & c_u\\
        c_d & c_l
    \end{pmatrix}
    = \begin{pmatrix}
        \sqrt{p} \; e^{i \alpha} & \sqrt{1-p} \; e^{i\beta} \\
        - \sqrt{1-p} \; e^{i \gamma} & \sqrt{p} \; e^{i (\gamma + \beta - \alpha)}
    \end{pmatrix} \quad , \quad \text{with } 0 \leq p \leq 1.
\end{align*}
\begin{claim} \label{claim:amplitudeexpressions}
    Let $\psi_\uparrow(n,t)$ and $\psi_\downarrow(n,t)$ be the up and down amplitudes at position $n$ at time $t$ but for the general coin operator $C$.
    When $t+n$ is odd or when $|n|>t$ we have $\psi_\uparrow(n,t)=\psi_\downarrow(n,t)=0$. Otherwise the amplitudes are given by
    \begin{align*}
        \psi_\uparrow  (n,t) &=
        \begin{cases}
            e^{i\alpha n} \sqrt{p^t} & n = t\\
            e^{i\left(\alpha n + (\gamma+\beta)(t-n)/2\right)} \sqrt{p^t} \sum_{k\geq 1} \binom{(t+n)/2}{k}\binom{(t-n)/2-1}{k-1} \left(-\frac{1-p}{p}\right)^k & n < t
        \end{cases}\\
        \psi_\downarrow(n,t) &= -e^{i(\alpha n +(\gamma+\beta)(t-n)/2 -\beta)} \\
        &\qquad \qquad \times \quad \sqrt{(1-p)p^{t-1}} \sum_{k\geq 0} \binom{(t+n)/2}{k} \binom{(t-n)/2-1}{k} \left(-\frac{1-p}{p}\right)^k
    \end{align*}
    The probabilities $|\psi_\uparrow(n,t)|^2$ and $|\psi_\downarrow(n,t)|^2$ associated to these amplitudes are independent of the complex phases $\alpha,\beta,\gamma$ of the coin operator.
\end{claim}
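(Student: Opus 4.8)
The plan is to interpret each amplitude as a sum over coin-trajectories and to reduce the claim to a run-counting problem together with careful bookkeeping of magnitudes, signs and phases. Writing $U = S(\mathrm{Id}_\mathrm{pos}\otimes C)$, one step sends the coin vector at each site through $C$ and then shifts the up-part right and the down-part left. Expanding $U^t\ket{0,\uparrow}$ therefore expresses $\psi_\uparrow(n,t)$ and $\psi_\downarrow(n,t)$ as sums over sequences of coin states $s_0=\uparrow,s_1,\dots,s_t$, where the particle moves right at step $i$ when $s_i=\spinup$ and left when $s_i=\spindown$, and the amplitude of such a path is the product $\prod_{i=1}^t C_{s_i,s_{i-1}}$ of the entries $c_r,c_u,c_d,c_l$. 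First I would dispose of the degenerate cases: if $|n|>t$ no path reaches position $n$, and a parity argument on $R-L=n$ with $R+L=t$ (where $R,L$ count the up- and down-moves) kills the case $t+n$ odd. The extreme case $n=t$ for $\psi_\uparrow$ admits only the all-up path, contributing $c_r^{\,t}=\sqrt{p^{\,t}}\,e^{i\alpha t}$, which matches the first line.

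For $n<t$ I would classify paths by their number of coin flips. Fixing the endpoint forces $R=(t+n)/2$ up-moves and $L=(t-n)/2$ down-moves. A path ending in $\spinup$ has an even number $2k$ of flips, and a path ending in $\spindown$ an odd number $2k+1$; in terms of maximal runs this means exactly $k$ (resp.\ $k+1$) down-runs, alternating with the up-runs. Counting the ways to split the $R+1$ ups (including $s_0$) and the $L$ downs into the prescribed numbers of nonempty alternating runs is a standard stars-and-bars computation, yielding $\binom{R}{k}\binom{L-1}{k-1}$ for $\psi_\uparrow$ and $\binom{R}{k}\binom{L-1}{k}$ for $\psi_\downarrow$, which are precisely the binomials appearing in the statement.

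The heart of the argument, and the step I expect to be most delicate, is the weight bookkeeping. Every path with a fixed final coin and fixed $k$ has the \emph{same} multiset of transition types: the counts $n_{uu},n_{ud},n_{du},n_{dd}$ are determined by $R,L,k$ alone (e.g.\ for $\psi_\uparrow$ one gets $n_{uu}=R-k$, $n_{dd}=L-k$, $n_{ud}=n_{du}=k$). Hence all such paths share the same magnitude $\sqrt{p^{\,t}}\,\bigl(\tfrac{1-p}{p}\bigr)^{k}$ (resp.\ $\sqrt{(1-p)p^{\,t-1}}\,\bigl(\tfrac{1-p}{p}\bigr)^{k}$), the same sign $(-1)^k$ (resp.\ $(-1)^{k+1}$) coming from the minus sign in $c_d$, and the same complex phase. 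Substituting the counts and simplifying, the phase exponent collapses to $\alpha n+(\gamma+\beta)(t-n)/2$ for $\psi_\uparrow$, and to that same expression plus a fixed correction for $\psi_\downarrow$, in both cases \emph{independent of $k$}. Because this phase is common to every term it factors out of the sum over $k$, leaving exactly the claimed series, and since it has modulus one it disappears upon forming $|\psi_\uparrow(n,t)|^2$ and $|\psi_\downarrow(n,t)|^2$; this is precisely the asserted independence of the probabilities from $\alpha,\beta,\gamma$.

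The genuinely routine part is only the algebraic simplification of the phase exponent, which must be carried out separately for the two parities of the final coin, and the elementary run-counting. One could instead verify the closed forms by induction on $t$ using the one-step recursions $\psi_\uparrow(n,t)=c_r\psi_\uparrow(n-1,t-1)+c_u\psi_\downarrow(n-1,t-1)$ and $\psi_\downarrow(n,t)=c_d\psi_\uparrow(n+1,t-1)+c_l\psi_\downarrow(n+1,t-1)$, which reduces to a Pascal-type binomial identity; however, the trajectory-counting route is preferable here because it produces the phase-independence of the probabilities essentially for free, as a by-product of the observation that the phase is constant over all contributing paths. A tempting shortcut — absorbing $\alpha,\beta,\gamma$ into a diagonal gauge transformation $U\mapsto VUV^\dagger$ — would make the phase-independence transparent, but I expect the main obstacle there to be that $V$ must be position-dependent to commute appropriately with the shift $S$, so I would fall back on the direct transition-count above.
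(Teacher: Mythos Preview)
Your proposal is correct and follows essentially the same route as the paper: both proofs expand the amplitude as a sum over coin trajectories, classify paths by the number $k$ of down-excursions, observe that the transition-type multiset (hence the weight) depends only on $k$, and then count the trajectories for fixed $k$ to obtain the two binomial factors. The only cosmetic difference is that the paper phrases the count as ``choices made while in the top/bottom layer'' whereas you phrase it as a stars-and-bars count of run lengths; these are the same enumeration, and the phase bookkeeping and the resulting independence of $|\psi|^2$ from $\alpha,\beta,\gamma$ are handled identically.
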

Note that the lemma follows directly from the claim by setting $p=1/2$, $\alpha=\beta=0$ and $\gamma=\pi$ to obtain the Hadamard coin matrix. Furthermore note that for a more general starting state, not equal to $\ket{0,\uparrow}$, the probabilities \emph{do} depend on the complex phases present in the coin operator.

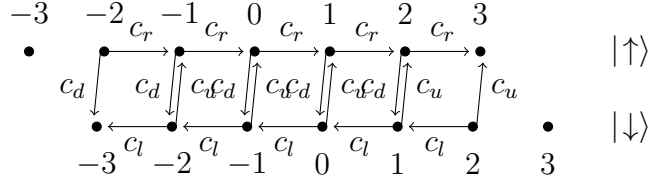
\begin{figure}
    \begin{center}
        \begin{tikzpicture}
            \def\xshift{0.9};
            \foreach \x in {-2,-1,0,1,2}
            {
                \draw[shorten >=0.15cm,->] (\x, 1)              -- node[above] {$c_r$} (\x+1, 1);
                \draw[shorten >=0.15cm,->] (\x-0.05, 1)         -- node[left]  {$c_d$} (\x+\xshift-1.05, 0);
                \draw[shorten >=0.15cm,->] (\x+\xshift+0.05, 0) -- node[right] {$c_u$} (\x+1.05, 1);
                \draw[shorten >=0.15cm,->] (\x+\xshift, 0)      -- node[below] {$c_l$} (\x+\xshift-1, 0);
            }
            \foreach \x in {-3,...,3}
            {
                \draw[fill] (\x+\xshift, 0) circle (0.06);
                \draw[fill] (\x, 1) circle (0.06);
                \node at (\x,1.5) {$\x$};
                \node at (\x+\xshift,-0.5) {$\x$};
            }
            \node at (5.0,1) {$\spinup$};
            \node at (5.0,0) {$\spindown$};
        \end{tikzpicture}
    \end{center}
    \caption{\label{fig:diagram2}Schematic representation of one step $U$ of the walk with generic coin. It is similar to Figure \ref{fig:diagram1} but the image is slightly tilted and for a general coin.}
\end{figure}
We want to have an expression for both the up and down component at position $n$ after $t$ steps, i.e. we want to know $\langle n,\uparrow\!|U^t|0,\uparrow\rangle$ and $\langle n,\downarrow\!|U^t|0,\uparrow\rangle$. These two cases will be handled separately. We will use the path counting technique, where one counts all possible paths starting at $\ket{0,\uparrow}$ and end at the desired state. Every path gets a certain (possibly negative) amplitude and these have to be added.
\subsubsection{Up component}
We want to find all possible paths from $\ket{0,\uparrow}$ to $\ket{n,\uparrow}$ using $t$ steps. If $n=t$ there is exactly one path. Otherwise, we have $-t< n < t$. Figure \ref{fig:diagram2} shows the directed graph on which we consider possible paths. Consider a single path and let $r,l,u,d$ be the number of times the path uses the right, left, up and down arrows respectively. To end at $\ket{n,\uparrow}$ we then have
\begin{align*}
    r+l+u+d&=t &\text{total number of steps}\\
    r-l &=n    &\text{ending column}\\
    u   &=d    &\text{start up and end up}
\end{align*}
Let $k=u=d$, then we have $r=\frac{t+n}{2} - k$ and $l=\frac{t-n}{2} - k$. We have $k\geq 1$ (we need to go down and up at least once) and $k\leq\frac{t-n}{2},\frac{t+n}{2}$.\\

For a specific set of values $(r,l,u,d)$ the path will arrive with an amplitude $(c_r)^r(c_l)^l(c_u)^u(c_d)^d$. So we sum over all possible values for $(r,l,u,d)$ and count how many paths there are for a specific set of values $(r,l,u,d)$. We can construct such paths as follows. Construct a sequence of choices to make if the walker is in the top layer and another sequence of choices to make if the walker is in the bottom layer.
The walker is in the $\spinup$ state (top layer) $k+r=\frac{t+n}{2}$ times, out of which $r$ times it goes right and $k$ times it goes down. There are $\binom{(t+n)/2}{k}$ possible ways to do this. Likewise, the particle is in $\spindown$ (bottom layer) $l+k=\frac{t-n}{2}$ times and has to choose between left and up. The \emph{last} of these choices should always be up, so this gives $\binom{(t-n)/2-1}{k-1}$ possibilities. To construct the full path, start with the top-layer choices, and whenever the choice is `down', continue with the bottom-layer choices and so on.
Therefore
\begin{align*}
    \bra{n,\uparrow}U^t\ket{0,\uparrow} &= \begin{cases} (c_r)^t & n = t\\
    \sum_{k\geq 1} \binom{(t+n)/2}{k}\binom{(t-n)/2-1}{k-1} c_r^{(t+n)/2-k} c_l^{(t-n)/2-k} c_u^k c_d^k & n < t \end{cases}.
\end{align*}
Now rewrite the last sum for $n<t$ and group the factors that do not depend on $k$:
\begin{align*}
    c_r^{(t+n)/2} c_l^{(t-n)/2} \sum_{k\geq 1} \binom{(t+n)/2}{k}\binom{(t-n)/2-1}{k-1} \left(\frac{c_u c_d}{c_r c_l}\right)^k
\end{align*}
Note that $\frac{c_u c_d}{c_r c_l} = -\frac{1-p}{p}$ so this fraction is always a real (negative) number, regardless of the complex phases present in the entries of the coin matrix. The sum above in terms of $p$ and $\alpha,\beta,\gamma$ is equal to
\begin{align*}
    e^{i\left(\alpha n + (\gamma+\beta)(t-n)/2\right)} \sqrt{p^t} \sum_{k\geq 1} \binom{(t+n)/2}{k}\binom{(t-n)/2-1}{k-1} \left(-\frac{1-p}{p}\right)^k ,
\end{align*}
as claimed.
The probability $|\psi_\uparrow(n,t)|^2$ of being at $\ket{n,\uparrow}$ after $t$ steps when starting in $\ket{0,\uparrow}$ is independent of $\alpha,\beta,\gamma$ since the only dependence on these variables is in the prefactor $e^{i\left(\alpha n + (\gamma+\beta)(t-n)/2\right)}$ which always has norm 1.

\subsubsection{Down component}
For the down component, the equations are similar:
\begin{align*}
    r+l+u+d&=t   &\text{total number of steps}\\
    r-l    &=n+1 &\text{ending column (tilted)}\\
    u+1    &=d   &\text{start up and end down}
\end{align*}
Let $u=k$, then the equations give $r=(t+n)/2-k$ and $l=(t-n)/2-k-1$. The argument is the same as before, but now the last choice in the top layer has to be `down' with no restrictions on the last choice in the bottom layer.
We are in $\spinup$ $r+d=(t+n)/2+1$ times. The last choice has to be down, so this gives $\binom{(t+n)/2}{k}$.
We are in $\spindown$ $l+u=(t-n)/2-1$ times which gives $\binom{(t-n)/2-1}{k}$. The expression is therefore given by
\begin{align*}
    \psi_\downarrow(n,t) &= \sum_{k\geq 0} \binom{(t+n)/2}{k} \binom{(t-n)/2-1}{k} c_u^k c_d^{k+1} c_l^{(t-n)/2-k-1} c_r^{(t+n)/2-k} .
\end{align*}
Rewriting this in terms of $p,\alpha,\beta,\gamma$ gives the expression given in the claim. Again the only dependency on the complex phases is in the prefactor which has norm 1 so the probability $|\psi_\downarrow(n,t)|^2$ only depends on $p$.

\subsection{Hadamard walk modulo 2 - Sierpinski triangle} \label{sec:hadamardmod2}
When the amplitudes of the Hadamard walk are plotted modulo two, the Sierpinski triangle appears in a similar fashion to Pascal's triangle. To see why this is the case, we note that to find the amplitudes at some time $t$ modulo two it is enough to consider a process where every single time-step is done modulo two. The scaled Hadamard operator becomes
\begin{align*}
    \sqrt{2}H \equiv \begin{pmatrix} 1 & 1 \\ 1 & 1 \end{pmatrix} \mod 2,
\end{align*}
and we can immediately see that the amplitude sent to the right is the same as the amplitude sent to the left. More precisely, after any time-step the amplitude at $\ket{n-1,\downarrow}$ is the same as the amplitude at $\ket{n+1,\uparrow}$ modulo two, for all $n$. 
\begin{figure}
    \begin{center}
        \begin{tikzpicture}
            \def\xscale{0.8};
            \def\scale{0.8};
            
            \draw[fill,red!15!white] (0,-1*\scale)          ellipse ({0.92*\xscale} and {0.5*\scale});
            \draw[fill,red!15!white] (-1*\xscale,-2*\scale) ellipse ({0.92*\xscale} and {0.5*\scale});
            \draw[fill,red!15!white] ( 1*\xscale,-2*\scale) ellipse ({0.92*\xscale} and {0.5*\scale});
            \draw[fill,red!15!white] (-2*\xscale,-3*\scale) ellipse ({0.92*\xscale} and {0.5*\scale});
            \draw[fill,red!15!white] ( 0*\xscale,-3*\scale) ellipse ({0.92*\xscale} and {0.5*\scale});            
            \draw[fill,red!15!white] ( 2*\xscale,-3*\scale) ellipse ({0.92*\xscale} and {0.5*\scale});
            \draw[fill,red!15!white] (-3*\xscale,-4*\scale) ellipse ({0.92*\xscale} and {0.5*\scale});
            \draw[fill,red!15!white] (-1*\xscale,-4*\scale) ellipse ({0.92*\xscale} and {0.5*\scale});
            \draw[fill,red!15!white] ( 1*\xscale,-4*\scale) ellipse ({0.92*\xscale} and {0.5*\scale});
            \draw[fill,red!15!white] ( 3*\xscale,-4*\scale) ellipse ({0.92*\xscale} and {0.5*\scale});
            
            \node at ( 0*\xscale,  0*\scale) {$\twovect{ 1}{ 0}$};
            \node at (-1*\xscale, -1*\scale) {$\twovect{ 0}{ 1}$};
            \node at (+1*\xscale, -1*\scale) {$\twovect{ 1}{ 0}$};
            
            \node at (-2*\xscale, -2*\scale) {$\twovect{ 0}{ 1}$};
            \node at ( 0*\xscale, -2*\scale) {$\twovect{ 1}{ 1}$};
            \node at (+2*\xscale, -2*\scale) {$\twovect{ 1}{ 0}$};
            
            \node at (-3*\xscale, -3*\scale) {$\twovect{ 0}{ 1}$};
            \node at (-1*\xscale, -3*\scale) {$\twovect{ 1}{ 0}$};
            \node at (+1*\xscale, -3*\scale) {$\twovect{ 0}{ 1}$};
            \node at (+3*\xscale, -3*\scale) {$\twovect{ 1}{ 0}$};
            
            \node at (-4*\xscale, -4*\scale) {$\twovect{ 0}{ 1}$};
            \node at (-2*\xscale, -4*\scale) {$\twovect{ 1}{ 1}$};
            \node at ( 0*\xscale, -4*\scale) {$\twovect{ 1}{ 1}$};
            \node at (+2*\xscale, -4*\scale) {$\twovect{ 1}{ 1}$};
            \node at (+4*\xscale, -4*\scale) {$\twovect{ 1}{ 0}$};           
            
            \foreach \y in {0}
            {
            \foreach \x in {0}
            {
                \draw[->] (\x*\xscale-0.15*\xscale,-\y*\scale-0.2) -- (\x*\xscale - 0.75*\xscale,-\y*\scale-0.8*\scale);
                \draw[->] (\x*\xscale+0.15*\xscale,-\y*\scale-0.2) -- (\x*\xscale - 0.65*\xscale,-\y*\scale-0.8*\scale);
                \draw[->] (\x*\xscale-0.15*\xscale,-\y*\scale-0.2) -- (\x*\xscale + 0.65*\xscale,-\y*\scale-0.8*\scale);
                \draw[->] (\x*\xscale+0.15*\xscale,-\y*\scale-0.2) -- (\x*\xscale + 0.75*\xscale,-\y*\scale-0.8*\scale);
            }
            }
            \foreach \y in {1}
            {
            \foreach \x in {-1, 1}
            {
                \draw[->] (\x*\xscale-0.15*\xscale,-\y*\scale-0.2) -- (\x*\xscale - 0.75*\xscale,-\y*\scale-0.8*\scale);
                \draw[->] (\x*\xscale+0.15*\xscale,-\y*\scale-0.2) -- (\x*\xscale - 0.65*\xscale,-\y*\scale-0.8*\scale);
                \draw[->] (\x*\xscale-0.15*\xscale,-\y*\scale-0.2) -- (\x*\xscale + 0.65*\xscale,-\y*\scale-0.8*\scale);
                \draw[->] (\x*\xscale+0.15*\xscale,-\y*\scale-0.2) -- (\x*\xscale + 0.75*\xscale,-\y*\scale-0.8*\scale);
            }
            }
            \foreach \y in {2}
            {
            \foreach \x in {-2, 0, 2}
            {
                \draw[->] (\x*\xscale-0.15*\xscale,-\y*\scale-0.2) -- (\x*\xscale - 0.75*\xscale,-\y*\scale-0.8*\scale);
                \draw[->] (\x*\xscale+0.15*\xscale,-\y*\scale-0.2) -- (\x*\xscale - 0.65*\xscale,-\y*\scale-0.8*\scale);
                \draw[->] (\x*\xscale-0.15*\xscale,-\y*\scale-0.2) -- (\x*\xscale + 0.65*\xscale,-\y*\scale-0.8*\scale);
                \draw[->] (\x*\xscale+0.15*\xscale,-\y*\scale-0.2) -- (\x*\xscale + 0.75*\xscale,-\y*\scale-0.8*\scale);
            }
            }
            \foreach \y in {3}
            {
            \foreach \x in {-3, -1, 1, 3}
            {
                \draw[->] (\x*\xscale-0.15*\xscale,-\y*\scale-0.2) -- (\x*\xscale - 0.75*\xscale,-\y*\scale-0.8*\scale);
                \draw[->] (\x*\xscale+0.15*\xscale,-\y*\scale-0.2) -- (\x*\xscale - 0.65*\xscale,-\y*\scale-0.8*\scale);
                \draw[->] (\x*\xscale-0.15*\xscale,-\y*\scale-0.2) -- (\x*\xscale + 0.65*\xscale,-\y*\scale-0.8*\scale);
                \draw[->] (\x*\xscale+0.15*\xscale,-\y*\scale-0.2) -- (\x*\xscale + 0.75*\xscale,-\y*\scale-0.8*\scale);
            }
            }
        \end{tikzpicture}
    \end{center}
    \caption{\label{fig:qpascal3}Amplitudes of the first 5 steps of the scaled Hadamard walk modulo two, starting in $\ket{0,\uparrow}$. It is similar to Figure \ref{fig:qpascal2} but the amplitudes are considered modulo two. Every pair $(\cdot,\cdot)$ corresponds to up and down components of the state. The dotted arrow in Figure \ref{fig:qpascal2} becomes a normal arrow because ${-1}\equiv 1 \mod 2$. The red ellipses indicate pairs of values that are the same and form Pascal's triangle modulo two.}
\end{figure}
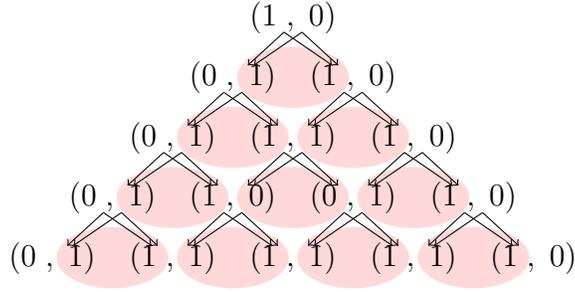
This idea is shown in Figure \ref{fig:qpascal3} which is similar to Figure \ref{fig:qpascal2} but modulo two. An ellipse is drawn around the pairs of amplitudes of states $\ket{n-1,\downarrow}$ and $\ket{n+1,\uparrow}$. The figure shows that the two values in each ellipse are equal, and are the sum of the values in the two neighbouring ellipses above it. This is the same rule with which Pascal's triangle can be constructed. Indeed, taking one value out of every ellipse, the Sierpinski triangle can be obtained. These are the either the red or the blue values shown in Figure \ref{fig:qpascal2}.

\subsection{Hadamard walk modulo 3 - Sierpinski carpet} \label{sec:hadamardmod3}
\begin{figure}
    \begin{center}
        \includegraphics[width=0.9\textwidth]{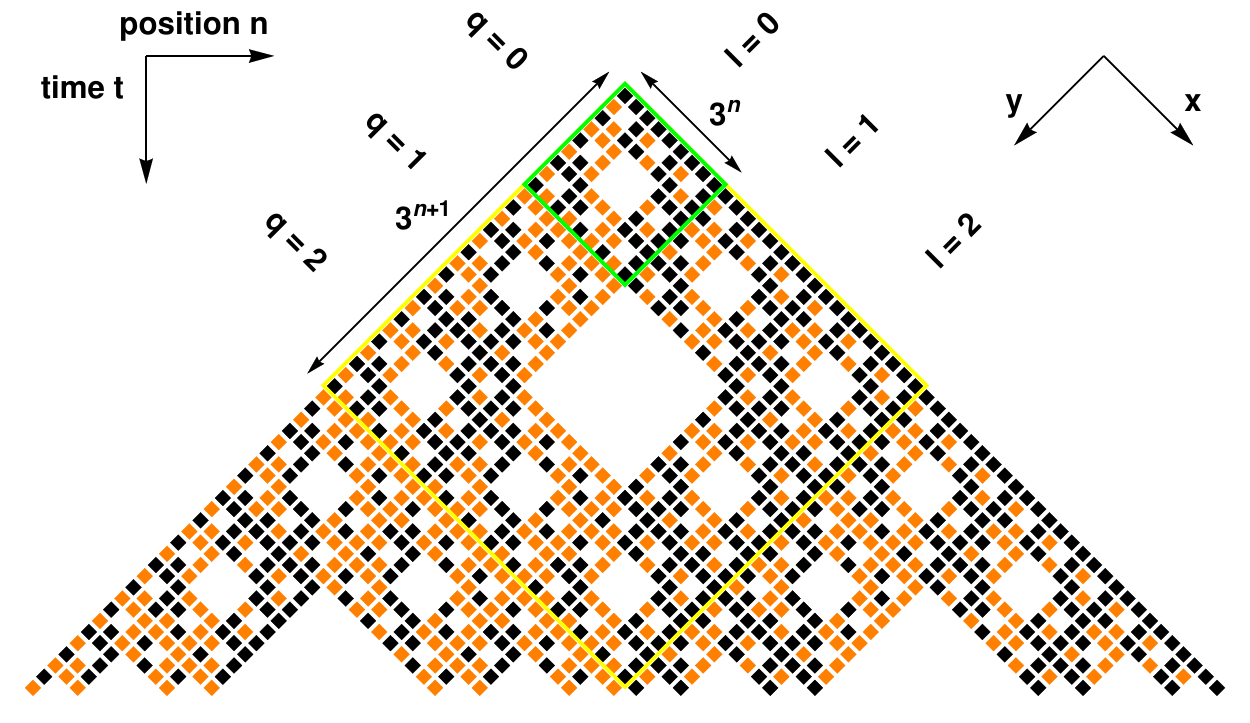}
    \end{center}
    \caption{\label{fig:carpet1}The start of the Sierpinski carpet resulting from colouring the scaled Hadamard walk amplitudes modulo 3. The horizontal direction is position and the vertical direction is time. The shape drawn at each point is a diamond, i.e. a rotated square instead of a square, because this gives a better visualisation of the $x,y$ coordinates.}
\end{figure}
We will now show that the $\spindown$ components of the scaled walk, modulo three, give rise to the Sierpinski carpet. In particular, we colour a square white if the amplitude is divisible by $3$, and give it a different colour otherwise. Figure \ref{fig:carpet1} shows the start of the resulting carpet. The top of the carpet is at $t=1$ and $n={-1}$ indicated by the \emph{blue} values in Figure \ref{fig:qpascal2}. Considering only these values, indexed by a row $R$ and column $C$ we are interested in amplitudes at $t=R+1$ and $n=2C-R-1$. Define $H_\mathrm{blue}(R,C):=\psi_\downarrow(2C-R-1,R+1)$, then
\begin{align}
    H_\mathrm{blue}(R,C) &= (-1)^{R-C} \sum_{k=0}^{\min(C,R-C)} \binom{C}{k}\binom{R-C}{k} (-1)^{k}. \label{eq:Hblue}
\end{align}
For the structure of the Sierpinski carpet, however, it is more convenient to consider coordinates $x,y$ that are aligned with the square structure of the carpet. The choice of these directions is indicated in Figure \ref{fig:carpet1}, and we have $R=x+y$ and $C=x$. As a function of these coordinates we define $\Phi(x,y) = H_\mathrm{blue}(x+y,x)$, so
\begin{align}
    \Phi(x,y) = (-1)^y \sum_{k=0}^{\min(x,y)} \binom{x}{k} \binom{y}{k} (-1)^{k}. \label{eq:phidef}
\end{align}
A pixel at coordinates $x,y$ is now coloured white if $\Phi(x,y)\equiv 0 \mod 3$ and a different colour otherwise. To show that the resulting figure is the Sierpinski carpet we have to show that for all $n\geq 1$ and $0\leq l,q \leq 2$ with $(l,q)\neq(1,1)$,
\begin{align}
    \Phi(x,y) \equiv \pm \Phi(l\cdot3^n + x,q\cdot3^n + y) \mod 3 \qquad \text{for all } 0\leq x,y \leq 3^n - 1 \label{eq:sim1}
\end{align}
where this means that for every $x,y,l,q$ the equivalence should hold with either a plus or minus sign. For $(l,q)=(1,1)$ we require
\begin{align}
    \Phi(3^n+x,3^n+y) \equiv 0 \mod 3 \qquad \text{for all } 0\leq x,y \leq 3^n - 1 \label{eq:sim2}
\end{align}
Figure \ref{fig:carpet1} shows this graphically. The values $(l,q)=(1,1)$ corresponds to the empty square in the middle, and all other values of $(l,q)$ should be copies of the square at $(l,q)=(0,0)$, up to exchanging the non-white colours.

To show this we prove something slightly more general.
\begin{definition} \label{def:fm}
For any $m\in\mathbb{Z}$ define $f_m:\mathbb{N}\times\mathbb{N}\to\mathbb{Z}$ as
\begin{align*}
    f_m(x,y) = \sum_{k=0}^{\min(x,y)} \binom{x}{k}\binom{y}{k} (-m)^k
\end{align*}
\end{definition}
The reason for the minus sign in $(-m)^k$ is that all valid quantum walks will have $m\geq 0$ this way, as will become clear later.
As a side note, this function is a special case of the so-called hypergeometric function $_2F_1(a,b;c;z)$, namely $f_m(x,y) = {_2}F_1(-x,-y,1,-m)$.
The following claim could be seen as something similar to Corollary \ref{cor:addpower} but for $f_m$:
\begin{claim} \label{claim:quantumlemma}
    Let $p$ be a prime and let $0\leq l,q \leq p-1$. Then we have for all $m\in\mathbb{Z}$ and for all $0\leq x,y\leq p^n - 1$
    \begin{align*}
        f_m(l\cdot p^n + x \;,\; q\cdot p^n + y) \equiv f_m(l,q)\cdot f_m(x,y) \mod p .
    \end{align*}
\end{claim}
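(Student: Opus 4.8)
The plan is to treat this claim as the $f_m$-analogue of Anton's Lemma (Corollary \ref{cor:addpower}): just as a binomial coefficient factors modulo $p$ when its arguments are split at a power of $p$, I expect $f_m$ to factor in the same way, precisely because $f_m$ is assembled out of binomial coefficients. Accordingly, I would substitute the decompositions $X = l\cdot p^n + x$ and $Y = q\cdot p^n + y$ directly into Definition \ref{def:fm}, reindex the summation variable in a mixed-radix fashion at the modulus $p^n$, split each summand modulo $p$, and finally recognize the resulting double sum as the product $f_m(l,q)\,f_m(x,y)$.

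In more detail, the steps in order would be as follows. First, since $\binom{X}{k}\binom{Y}{k}=0$ as an integer whenever $k>\min(X,Y)$, I would extend the defining sum to run over all $k\geq 0$ without changing its value. Next, writing every index uniquely as $k = a\cdot p^n + b$ with $0\leq b\leq p^n-1$ and $a\geq 0$, the sum becomes
\begin{align*}
    f_m(X,Y) = \sum_{a\geq 0}\sum_{b=0}^{p^n-1} \binom{l p^n + x}{a p^n + b}\binom{q p^n + y}{a p^n + b}\,(-m)^{a p^n + b}.
\end{align*}
Because $x,y,b < p^n$, Corollary \ref{cor:addpower} (applied with its power parameter equal to our $n$) gives $\binom{l p^n + x}{a p^n + b}\equiv \binom{l}{a}\binom{x}{b}$ modulo $p$, and likewise for the second factor. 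For the scalar factor I would invoke Fermat's little theorem in the form $z^{p^n}\equiv z \pmod p$, valid for every integer $z$, so that $(-m)^{a p^n + b} = \big((-m)^{p^n}\big)^a (-m)^b \equiv (-m)^a (-m)^b \pmod p$. Substituting these three congruences and regrouping, the double sum factors as
\begin{align*}
    \left(\sum_{a\geq 0}\binom{l}{a}\binom{q}{a}(-m)^a\right)\left(\sum_{b=0}^{p^n-1}\binom{x}{b}\binom{y}{b}(-m)^b\right) \equiv f_m(l,q)\,f_m(x,y) \pmod p,
\end{align*}
where the first factor collapses to $f_m(l,q)$ since its terms vanish for $a>\min(l,q)$, and the second is exactly $f_m(x,y)$ because $\min(x,y)\leq p^n-1$.

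I expect the only genuinely delicate points to be bookkeeping rather than conceptual: I must make sure that extending and then re-restricting the summation ranges is justified by the convention $\binom{u}{v}=0$ for $v>u$, so that the ``missing'' terms (those with $a>l$, or with $a\geq p$ where $\binom{X}{k}$ itself vanishes) truly disappear on both sides; and I must keep the two roles of the letter $m$ separate, namely the subscript $m$ of $f_m$ versus the exponent parameter of Corollary \ref{cor:addpower}, which in this application is our $n$. The one step that genuinely needs an ingredient beyond the binomial machinery already developed is the treatment of $(-m)^k$, where Fermat's little theorem is what lets the exponent split cleanly across the radix decomposition; everything else is a direct transcription of the proof of Corollary \ref{cor:addpower}.
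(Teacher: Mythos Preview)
Your proposal is correct and follows essentially the same route as the paper: extend the summation range using the vanishing of binomial coefficients, reindex the summation variable base-$p^n$, apply Corollary~\ref{cor:addpower} to split each binomial factor, use Fermat's little theorem to split the $(-m)^k$ factor, and regroup into the product $f_m(l,q)\,f_m(x,y)$. The only cosmetic difference is that the paper restricts the outer index to $0\leq s\leq p-1$ from the outset rather than letting $a\geq 0$ and observing the terms with $a>\min(l,q)$ vanish, but this is immaterial.
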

\begin{proof}
    Note that any sum can be split in the following way:
    \begin{align*}
        \sum_{k=0}^{p^{n+1}-1} g(k) = \sum_{s=0}^{p-1} \sum_{k=0}^{p^n - 1} g(s\cdot p^n + k) ,
    \end{align*}
    where $s$ takes the role of the most significant digit and $k$ takes the role of the other digits. We apply this idea to the sum in $f_m(x,y)$ where we note that $\min(lp^n + x, qp^n + y) \leq p^{n+1} -1$ but we can let the sum range all the way to $p^{n+1}-1$ because the summand is zero in this extra range. Therefore we have
    \begin{align*}
        f_m(l\cdot p^n + x \;,\; q\cdot p^n + y)  &= \sum_{k=0}^{p^{n+1}-1} \binom{l\cdot p^n+x}{k}\binom{q\cdot p^n+y}{k} (-m)^k \\
                                                &= \sum_{s=0}^{p-1}\sum_{k=0}^{p^n-1} \binom{l\cdot p^n+x}{s\cdot p^n + k}\binom{q\cdot p^n+y}{s\cdot p^n + k} (-m)^{s\cdot p^n + k}
    \end{align*}
    Note that by Fermat's little theorem we have $m^p \equiv m \mod p$ so $m^{s\cdot p^n} \equiv m^s \mod p$. Now we apply Corollary \ref{cor:addpower} to the binomial coefficients to obtain
    \begin{align*}
        f_m(l\cdot p^n + x \;,\; q\cdot p^n + y)  &\equiv \sum_{s=0}^{p-1}\sum_{k=0}^{p^n-1} \binom{l}{s}\binom{q}{s} \binom{x}{k}\binom{y}{k} (-m)^{s + k} \\
                                                &\equiv \left( \sum_{s=0}^{p-1} \binom{l}{s}\binom{q}{s} (-m)^{s} \right ) f_m(x,y) \\
                                                &\equiv f_m(l,q)\cdot f_m(x,y) \mod p,
    \end{align*}
    as required.
\end{proof}
Note that $l,q$ take the role of the most significant digits and $x,y$ are the other digits. Just as Corollary \ref{cor:addpower} implies Lucas's theorem, we can apply this claim inductively on the number of digits to arrive at a result very similar to Lucas's theorem but now for the function $f_m$:
\begin{lemma2}[Lucas'-like theorem for $f_m$] \label{lemma:quantumlucas}
    Let $p$ be prime and $x,y$ non-negative integers. Let $x=[x_n x_{n-1}\cdots x_0]_p$ and $y=[y_n y_{n-1}\cdots y_0]_p$. Then for all $m\in\mathbb{Z}$ we have
    \begin{align*}
        f_m(x,y) \equiv f_m(x_n,y_n)\; f_m(x_{n-1},y_{n-1})\cdots f_m(x_0,y_0) \mod p.
    \end{align*}
\end{lemma2}

We can now prove \eqref{eq:sim1} and \eqref{eq:sim2} by noting that $\Phi(x,y) = (-1)^y f_{1}(x,y)$ so by Claim \ref{claim:quantumlemma} we have
\begin{align*}
    \Phi(l\cdot 3^n + x, q\cdot 3^n + y) \equiv (-1)^{q\cdot 3^n} f_{1}(l,q) \Phi(x,y) \equiv \Phi(l,q)\Phi(x,y) \mod 3.
\end{align*}
where we used that $(-1)^{q\cdot 3^n} = (-1)^q$. Note that $\Phi(1,1) = 0$ which proves \eqref{eq:sim2} and $\Phi(l,q) \equiv \pm 1 \mod 3$ for the other values of $l,q$ which proves \eqref{eq:sim1}.

\subsection{Results for a more general quantum walk}
We can generalize the results of the previous section. First of all, we can consider the same numbers modulo any prime $p$. But more generally, the Hadamard operator $H$ could be replaced by any matrix $C\in U(2)$. As stated before, we can write any unitary $2\times 2$ matrix as
\begin{align*}
    C = \begin{pmatrix}
        c_r & c_u\\
        c_d & c_l
    \end{pmatrix}
    = \begin{pmatrix}
        \sqrt{p} \; e^{i \alpha} & \sqrt{1-p} \; e^{i\beta} \\
        -\sqrt{1-p} \; e^{i \gamma} & \sqrt{p} \; e^{i (\gamma + \beta - \alpha)}
    \end{pmatrix} \quad , \quad \text{with } 0 \leq p \leq 1.
\end{align*}
The expression for the amplitudes with general coin operator is given by $\psi_\downarrow(n,t)$ as given in Claim \ref{claim:amplitudeexpressions}, and we can do the same substitutions as before to go to the $x,y$ coordinates:
\begin{align*}
    \Phi_C(x,y) &= c_d c_r^x c_l^y \sum_{k\geq 0} \binom{x}{k} \binom{y}{k} \left(-\frac{1-p}{p}\right)^k \\
               &= c_d c_r^x c_l^y \; f_{m}(x,y) .
\end{align*}
where $m=(1-p)/p$ and we extend the definition of $f_m$ for non-integer $m$. Note that $(1-p)/p\geq 0$ for any valid coin which was the reason for defining $f_m$ with a minus sign.
In the previous sections we considered the \emph{amplitudes} of the quantum walk as opposed to the \emph{probabilities} (which are equal to the norm squared of the amplitudes). For the purpose of the Sierpinski carpet, this distinction was not important because all entries of the Hadamard matrix are real and we were only interested in whether or not an integer was zero modulo a prime. Since $x\equiv 0 \mod p \iff x^2\equiv 0 \mod p$, squaring did not matter. For a general coin, however, there could be complex amplitudes and so we consider the corresponding \emph{probabilities} to make sure all numbers involved are real. Note that since $f_m$ is real, the imaginary component of $\Phi_C(x,y)$ comes only from $c_d c_r^x c_l^y$. We therefore consider the probabilities:
\begin{align*}
    |\Phi_C(x,y)|^2 = |c_d c_r^x c_l^y |^2 \left( f_{m}(x,y) \right)^2 .
\end{align*}
In the previous sections we rescaled the Hadamard matrix by a factor of $\sqrt{2}$ so that all numbers involved became integer. For a general coin matrix, in order to consider the probabilities modulo a prime, we assume that the coin matrix is such that $m=(1-p)/p$ is integer. Note that this can not be achieved by scaling the entire matrix because $m$ is invariant under such scalings. In fact, we have $p=\frac{1}{1+m}$ and $m\geq 0$ has to be integer. Furthermore, as stated in Claim \ref{claim:quantumlemma}, the complex phases $\alpha,\beta,\gamma$ do not influence $|\Phi_C(x,y)|^2$. Therefore the most general form of the matrix we can consider to obtain integer probabilities is the unitary matrix
\begin{align*}
    C_m =
    \begin{pmatrix}
        \sqrt{1/(1+m)}  & \sqrt{m/(1+m)} \\
        \sqrt{m/(1+m)}  & -\sqrt{1/(1+m)}
    \end{pmatrix}
    \quad \text{for } m\in\mathbb{Z} ,\; m \geq 0 ,
\end{align*}
where have set $\alpha=\beta=0$ and $\gamma=\pi$ such that $C_1=H$, but any other setting of phases would be equally valid.
If we want to scale the matrix by a factor $\lambda$ such that $|c_d c_r^x c_l^y|^2$ is integer then this requires $\lambda = \sqrt{n(1+m)}$ for any integer $n\geq 1$. This gives a scaled matrix 
\begin{align}
    \sqrt{n(1+m)}C_m = \sqrt{n}
    \begin{pmatrix}
        1 & \sqrt{m} \\
        \sqrt{m} & -1
    \end{pmatrix} , \label{eq:scaledgeneralcoin}
\end{align}
and for this scaled matrix, $|c_d c_r^x c_l^y|^2 = m n^{x+y+1}$.
By Claim \ref{claim:quantumlemma} we have for this scaled coin matrix that
\begin{align*}
    |\Phi_C(l\cdot p^n+x,q\cdot p^n+y)|^2 &\equiv \frac{n^{(l+q)(p^n-1)-1}}{m} \; |\Phi_C(l,q)|^2 \; |\Phi_C(x,y)|^2 \mod p \\
                                          &\equiv \frac{1}{mn} \; |\Phi_C(l,q)|^2 \; |\Phi_C(x,y)|^2 \mod p ,
\end{align*}
where we used Fermat's little theorem in the second step.
For $m=1$ and $n=1$ we recover the exact same rules as for the Hadamard matrix. This class also includes the commonly used coin
\begin{align*}
    \frac{1}{\sqrt{2}}\begin{pmatrix}1 & i \\ i & 1\end{pmatrix}.
\end{align*}
To find out what kind of fractals are generated by these quantum walks, it is useful to note that we are only interested in distinguishing $|\Phi(x,y)|^2 \equiv 0 \mod p$ from $|\Phi(x,y)|^2\not\equiv 0 \mod p$. Since we have $|\Phi(x,y)|^2 = mn^{x+y+1}(f_m(x,y))^2$ we can see that if $m\equiv 0 \mod p$ or $n\equiv 0 \mod p$ then all values $|\Phi(x,y)|^2$ are zero modulo $p$ and there is no fractal since all pixels are white. Therefore, assume that both $m$ and $n$ are not zero modulo $p$. In that case we have $|\Phi(x,y)|^2 \equiv 0\mod p$ if and only if $f_m(x,y)\equiv 0 \mod p$. Now we can apply the quantum version of Lucas' theorem. By Lemma \ref{lemma:quantumlucas}, $f_m(x,y)\equiv 0 \mod p$ if and only if there is an $i$ such that $f_m(x_i,y_i)\equiv 0 \mod p$, where $x_i,y_i$ are the base-$p$ digits of $x,y$. 

In general, to find the fractal generated by a quantum walk with the general coin from Equation \eqref{eq:scaledgeneralcoin} for some $n,m$ that are non-zero modulo $p$, we simply have to compute $f_m(x,y) \mod p$ for only $0\leq x,y < p$ to find what we call the \emph{base image}. Figure \ref{fig:baseplots} shows these base images for several values of $m$ and $p$.  From this the fractal can be constructed in a simple recursive way, shown in Figure \ref{fig:constructionstages}, resulting in the fractals shown in Figure \ref{fig:generalfractals}.
This recursive method is valid because each recursion step corresponds to adding another digit to $x$ and $y$, and as mentioned above, a pixel will be white if and only if there are digits (i.e. a recursion step) in which the region corresponding to those digits is white. In case of the example construction (Figure \ref{fig:constructionstages}), the third picture corresponds to $x,y$ values in the range $0\leq x,y < 2^3$ which can be described by three digits modulo 2. Let $x=[x_2 x_1 x_0]_2$ and $y=[y_2 y_1 y_0]_2$, then $x_2,y_2$ specify which of the four biggest quadrants the pixel is in. Likewise, $x_1,y_1$ specify which of the four subquadrants of that first quadrant it is in, and $x_0,y_0$ specify the final position within that subquadrant. By the construction in Figure \ref{fig:constructionstages}, the pixel will be white if and only if one of those chosen quadrants was bottom-right. This is equivalent to saying that the pixel is white if and only if there is an $i$ such that $f_m(x_i,y_i)\equiv 0\mod p$.

\begin{figure}
    \makebox[\textwidth][c]{
        \includegraphics{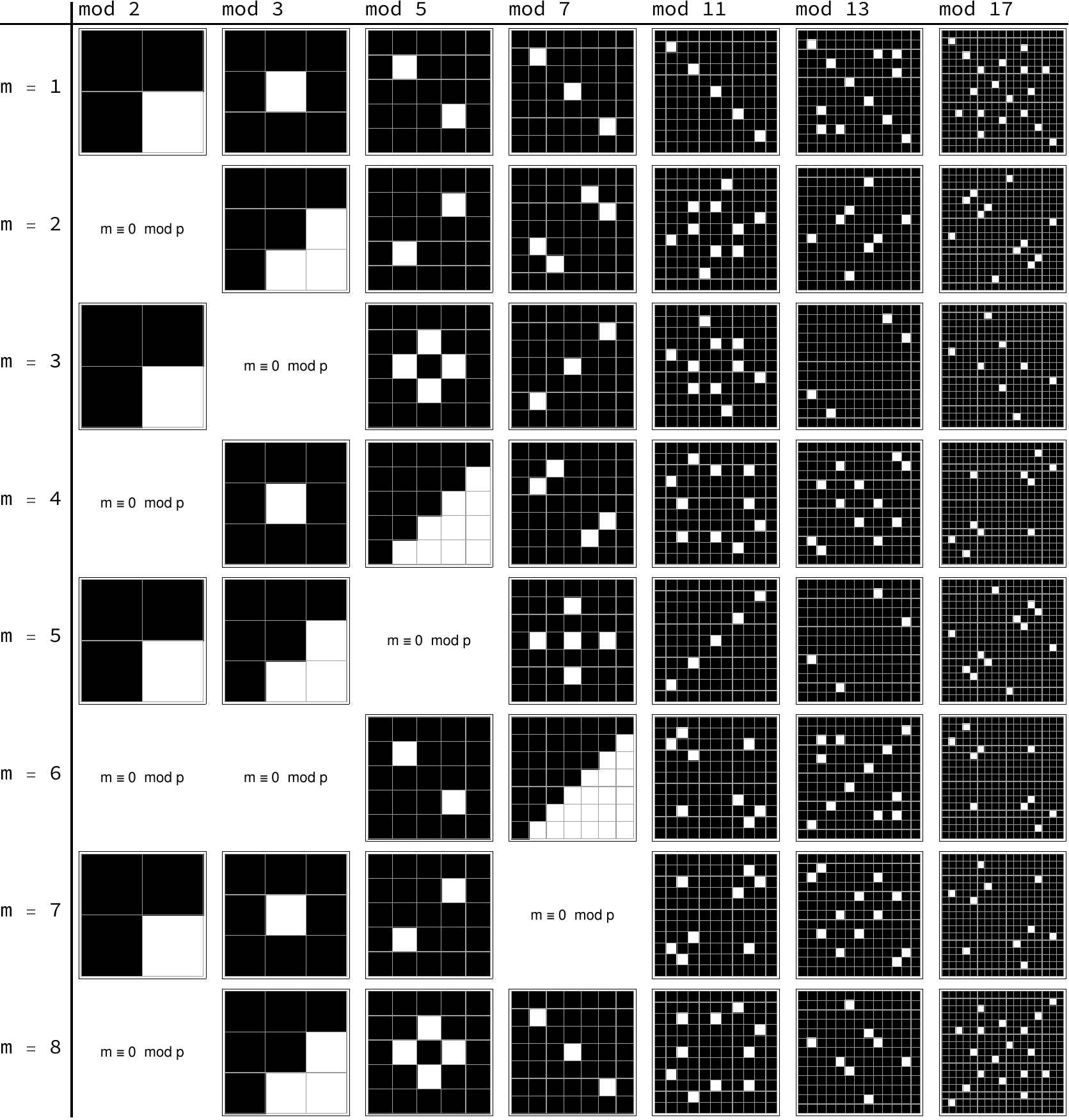}
    }
    \caption{\label{fig:baseplots} Base images: plots of $f_m(x,y) \mod p$ for $0\leq x,y < p$ for different values of $m$ and $p$ (nothing is shown for $m\equiv 0 \mod p$ for reasons explained in the text). Figure \ref{fig:constructionstages} explains how to construct the fractals from these base images and Figure \ref{fig:generalfractals} shows the resulting fractals.}
\end{figure}
\begin{figure}
    \centering
    \raisebox{-0.5\height}{\includegraphics{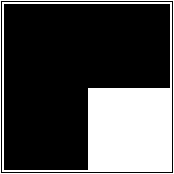}}
    $\implies$
    \raisebox{-0.5\height}{\includegraphics{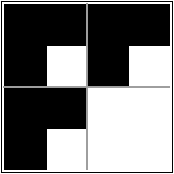}}
    $\implies$
    \raisebox{-0.5\height}{\includegraphics{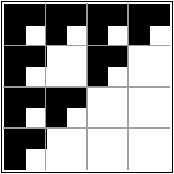}}
    $\implies$
    \raisebox{-0.5\height}{\includegraphics{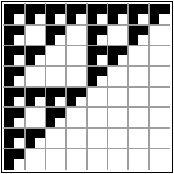}}
    \caption{\label{fig:constructionstages} Construction of the fractal from the \emph{base image}. The leftmost picture shows one of the base images from Figure \ref{fig:baseplots}. At each step, every black pixel is replaced by a copy of the base image. Infinite recursion steps yield the fractal. Some of these fractals are shown in Figure \ref{fig:generalfractals} (for finite recursion steps).}
\end{figure}
\begin{figure}
    \makebox[\textwidth][c]{
        \includegraphics[scale=0.8]{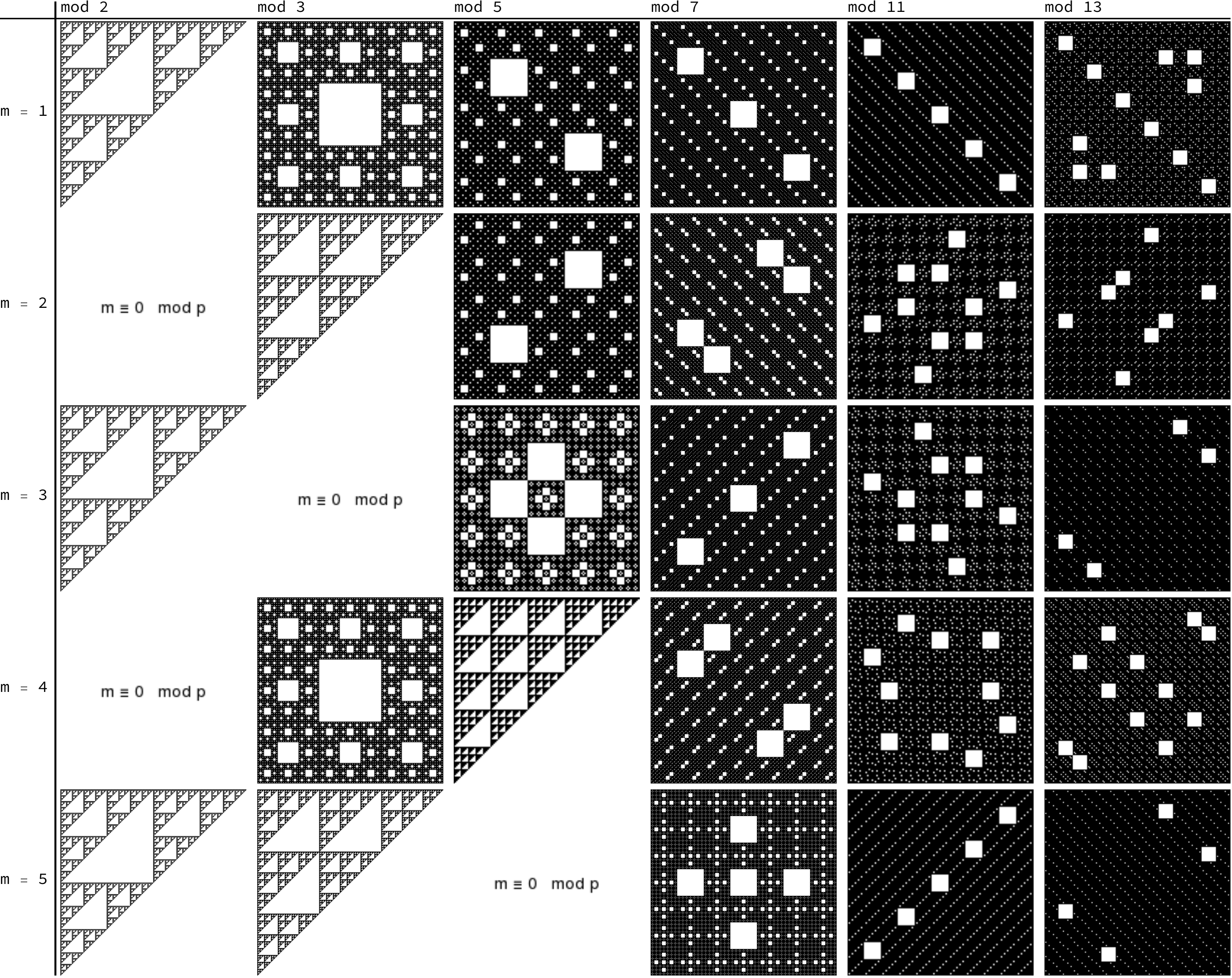}
    }
    \caption{\label{fig:generalfractals} Fractals obtained from general 1-dimensional quantum walks plotted modulo a prime. The number $m$ on the left represents the coin class, where $m=1$ includes the Hadamard coin. A pixel is coloured white if and only if the scaled probability is equal to zero modulo $p$.}
\end{figure}

\subsection{Other properties of the Hadamard triangle}
One can add the probabilities in each row of the triangle and by unitarity this sum will always be equal to one. Instead one can also consider summing all the \emph{amplitudes} in a row. Define the column vector $\Psi(t) = (\Psi_\uparrow(t) \;\;  \Psi_\downarrow(t))^T$ where $\Psi_\uparrow(t)$ is the sum of the up amplitudes at time $t$, i.e.
\begin{align*}
    \Psi_\uparrow(t) = \sum_{n=-t}^{t} \psi_\uparrow(n,t) ,
\end{align*}
and similar for $\Psi_\downarrow(t)$. Alternatively, consider the linear map
\begin{align*}
    \Sigma = \sum_{n\in\mathbb{Z}} \bra{n},
\end{align*}
so that $\Psi(t) = \Sigma \; U^t \ket{0,\uparrow}$.
Note that to go from time $t$ to $t+1$, one application of a coin and shift is performed ($U=S(\mathrm{Id}\otimes H)$), but the sums of all up or down amplitudes are invariant under the shift operation. In other words $\Sigma S = \Sigma$. Furthermore we have $\Sigma (\mathrm{Id}\otimes H) = H \Sigma$, so we have
\begin{align*}
    \Sigma U^t = \Sigma (S(\mathrm{Id}\otimes H))^t = H^t \Sigma.
\end{align*}
This can also be seen by simply looking at Figure \ref{fig:qpascal2}, and noting that $\Psi_\uparrow(t+1)=\frac{1}{\sqrt{2}}(\Psi_\uparrow(t)+\Psi_\downarrow(t))$ and $\Psi_\downarrow(t+1)=\frac{1}{\sqrt{2}}(\Psi_\uparrow(t)-\Psi_\downarrow(t))$, or simply
\begin{align*}
    \Psi(t+1) = H \Psi(t).
\end{align*}
The sum over all amplitudes, up and down, is therefore
\begin{align*}
    \Psi_\uparrow(t) + \Psi_\downarrow(t) =
    \begin{cases}
        \Psi_\uparrow(0) + \Psi_\downarrow(0) & t \text{ even}\\
        \sqrt{2} \Psi_\uparrow(0) & t \text{ odd}
    \end{cases}
\end{align*}
Note that when the process is scaled so that all numbers become integer (i.e. $H'=\sqrt{2}H$), as was done for the fractals, and the starting state is $\ket{0,\uparrow}$ then the above gives
\begin{align*}
    \Psi'_\uparrow(t) + \Psi'_\downarrow(t) =
    \begin{cases}
        2^{t/2}     & t \text{ even},\\
        2^{(t+1)/2} & t \text{ odd},
    \end{cases}
\end{align*}
so the sum of all amplitudes in a row is always a power of two.

~

Pascal's triangle has the property that summing over the so-called \emph{shallow diagonals} yields the Fibonacci sequence. The $n$'th shallow diagonal $d_n$ ($n\geq 0$) corresponds to the sum
\begin{align*}
    d_n = \sum_{c = 0}^{\lfloor n/2 \rfloor} \binom{n-c}{c},
\end{align*}
over the numbers in Pascal's triangle and is equal to the Fibonacci number $F_{n+1}$, where $F_{1}=F_{2}=1$ and $F_{n+1}=F_n + F_{n-1}$. By the property $\binom{n}{k}=\binom{n-1}{k-1}+\binom{n-1}{k}$ it follows that
\begin{align*}
    d_n &= \sum_{c\geq 1} \binom{(n-1)-c}{c-1} + \sum_{c \geq 0}\binom{(n-1)-c}{c}\\
        &= \sum_{c\geq 0} \binom{(n-2)-c}{c} + \sum_{c \geq 0}\binom{(n-1)-c}{c} = d_{n-2} + d_{n-1}.
\end{align*}
We can consider the same diagonals but now in the triangle of amplitudes of the Hadamard walk. In particular we will consider the same numbers that gave rise to the Sierpinski triangle, namely the red and blue numbers of Figure \ref{fig:qpascal2}. The blue numbers (down components) are given by $H_\mathrm{blue}$ as in Equation (\ref{eq:Hblue}). Similarly, the red numbers (up components) are given by
\begin{align*}
    H_\mathrm{red}(R,C) = \begin{cases} \sum_{k\geq 1} \binom{C+1}{k} \binom{R-C-1}{k-1} (-1)^{R-C-k} & C < R\\ 1 & C=R\end{cases}
\end{align*}
Unlike the case of Pascal's triangle, it now matters in which direction the diagonal is considered because the triangle is no longer symmetric. We therefore consider four options, corresponding to the two triangles (red and blue) and the two possible directions for the diagonals $\diagup$ and $\diagdown$. We denote the $\diagup$ diagonals by $A_\mathrm{red}$ and $A_\mathrm{blue}$ and the $\diagdown$ diagonals by $B_\mathrm{red}$ and $B_\mathrm{blue}$. They are defined as
\begin{align*}
\begin{array}{rlcrl}
    A_{\mathrm{blue},n} =& \sum_{c\geq 0} H_\mathrm{blue}(n-c,c) & ~ &
    B_{\mathrm{blue},n} =& \sum_{c\geq 0} H_\mathrm{blue}(n-c,n-2c)\\
    A_{\mathrm{red} ,n} =& \sum_{c \geq 0} H_\mathrm{red}(n-c,c) & ~ &
    B_{\mathrm{red} ,n} =& \sum_{c\geq 0} H_\mathrm{red}(n-c,n-2c) .
\end{array}
\end{align*}
Using the same property of binomial coefficients as before, we have
\begin{align*}
    A_{\mathrm{blue},n}
    &=\sum_{c\geq 0}\sum_{k\geq 0} \binom{c}{k} \binom{n-2c}{k} (-1)^{n-k} \\
    &=\sum_{c\geq 0}\sum_{k\geq 0} \binom{c}{k} \binom{n-2c-1}{k-1} (-1)^{n-k}
    + \sum_{c\geq 0}\sum_{k\geq 0} \binom{c}{k} \binom{n-2c-1}{k} (-1)^{n-k} \\
    &= A_{\mathrm{red},n-2} - A_{\mathrm{blue},n-1}
\end{align*}
Similarly we find
\begin{align*}
    A_{\mathrm{red},n} = A_{\mathrm{red},n-2} + A_{\mathrm{blue},n-1},
\end{align*}
and combining these two equations yields the same recurrence relation for both the red and blue diagonals:
\begin{align*}
    A_{n} &= - A_{n-1} + A_{n-2} + 2 A_{n-3},
\end{align*}
but with different initial conditions for the red and blue sequences.
For the diagonal in the other direction ($\diagdown$) we find
\begin{align*}
    B_{\mathrm{blue},n} &= B_{\mathrm{red},n-1} - B_{\mathrm{blue},n-2}, \\
    B_{\mathrm{red} ,n} &= B_{\mathrm{red},n-1} + B_{\mathrm{blue},n-2},
\end{align*}
which can be combined to form another recurrence relation
\begin{align*}
    B_{n} = B_{n-1} - B_{n-2} + 2 B_{n-3},
\end{align*}
that holds for both the blue and red sequence but with different initial conditions.

\section{Acknowledgements}
The authors would like to thank Florian Speelman and Jeroen Zuiddam for useful discussions and Frank den Hollander for feedback.
The work in this paper is supported by the Netherlands Organisation for Scientific Research (NWO) through Gravitation-grant NETWORKS-024.002.003.

\printbibliography

\end{document}